\title{Uniting General-Graph and Geometric-Based Radio Networks via Independence Number Parametrization}
\author{Peter Davies}
\affil{Durham University}
\newcommand{\hide}[1]{{}}
\newcommand{\Prob}[1]{\mathbb{P}\left[#1\right]}
\newcommand{\MIS}{\ensuremath{\mathbb{MIS}}}
\newcommand{\maxis}{\ensuremath{\alpha}}
\newcommand{\mis}{\ensuremath{\mathfrak{m}}}
\newtheorem{theorem}{Theorem}
\newtheorem{corollary}[theorem]{Corollary}
\newtheorem{lemma}[theorem]{Lemma}
\newtheorem{claim}[theorem]{Claim}
\newcommand{\nat}{\ensuremath{\mathbb N}}
\date{}
	\author{Peter Davies}
\authorrunning{Peter Davies} 
\keywords{Radio Networks, Broadcasting, Leader Election, Maximal Independent Set} 
\begin{document}

\maketitle

\begin{abstract}
In the study of radio networks, the tasks of broadcasting (propagating a message throughout the network) and leader election (having the network agree on a node to designate `leader') are two of the most fundamental global problems, and have a long history of work devoted to them. This work has two divergent strands: some works focus on exploiting the geometric properties of wireless networks based in physical space, while others consider general graphs. Algorithmic results in each of these avenues have often used quite different techniques, and produced bounds using incomparable parametrizations.

In this work, we unite the study of general-graph and geometric-based radio networks, by adapting the broadcast and leader election algorithm of Czumaj and Davies (JACM '21) to achieve a running-time parametrized by the {\it independence number} of the network (i.e., the size of the maximum independent set). This parametrization preserves the running time on general graphs, matching the best known, but also improves running times to near-optimality across a wide range of geometric-based graph classes. 

As part of this algorithm, we also provide the first algorithm for computing a maximal independent set in general-graph radio networks. This algorithm runs in $O(\log^3 n)$ time-steps, only a $\log n$ factor away from the $\Omega(\log^2 n)$ lower bound.
\end{abstract}

\section{Introduction}
The radio network model is an abstraction of networks of wireless devices, which aims to capture the essential behavior of wireless transmissions and interference, while simplifying the physical details. 

\subsection{Model}
The network is represented by an undirected graph $G=(V,E)$, where the nodes represent transmitter-receiver devices, and the edges represent direct reachability with wireless transmission between pairs of devices. We denote by $n$ the number of nodes in the graph, and by $D$ the diameter (the maximum distance between a pair of nodes in the graph). We will also use the independence number $\maxis$, which is the size of the maximum independent set in the graph.
Time is divided into synchronous time-steps, with each node making a choice to either transmit a message or listen for messages in each time-step. The defining behavior of radio networks is that listening nodes hear a message only if \emph{exactly} one of their neighbors transmits; otherwise, the wireless signals are assumed to have interfered with each other, providing the listening node with no useful information\footnote{This abstraction is in some sense worst-case. On the geometric-based side of radio network research, the signal-to-interference-plus-noise ratio (SINR) model uses an alternative criterion for hearing transmissions based more closely on the physical behavior of wireless transmissions; see, e.g., \cite{DGKN13}.}.

We work in the version of the model that is:

\begin{itemize}
	\item Ad-hoc: that is, nodes have no prior knowledge of the graph structure, their own degree, or their neighbors. We do assume that nodes have knowledge of (at least linear upper estimates of) the graph parameters $n$ and $D$. Our algorithm also uses the independence number of the network $\maxis$, but here any polynomial approximation will suffice; for our results on growth-bounded graphs, for example, it would suffice for nodes to know that $\maxis = poly(D)$.
	
	\item Without collision detection: that is, listening nodes cannot distinguish between hearing multiple transmission, and hearing no transmission.
	
	\item With synchronous wake-up: all nodes `wake up' in the same time-step and can begin operation immediately.
\end{itemize}

Since we study randomized algorithms in this work, we may assume that nodes are initially indistinguishable, but have access to their own private sources of randomness. Then, as a linear upper estimate of $n$ is known, nodes can uniformly randomly generate identifiers (IDs) from e.g. $[O(n^3)]$, which will be unique with high probability.

\subsection{Problems}
We study three fundamental problems in this work, all of which are well established in radio networks and related models:

\paragraph{Broadcasting} Broadcasting is the most basic global task: a single designated source node holds a message, and a broadcasting algorithm must ensure that upon completion, all nodes in the graph are aware of the message. We assume that the graph is connected, since otherwise broadcasting is impossible.

\paragraph{Leader Election} Leader Election is a self-organisation task often used as the starting point for more complex procedures. It involves all nodes in the network agreeing on the ID of one particular (arbitrary) node in the network to designate as leader. Again, we must assume the graph is connected for this to be possible.

\paragraph{Maximal Independent Set} Maximal Independent Set (MIS) entails all nodes deciding whether or not be part of an output set $\MIS\subseteq V$, which must satisfy two properties:

\begin{itemize}
	\item Independence: there can be no edge between two nodes in \MIS;
	\item Maximality: no vertex could have been added to \MIS\ without violating independence, or equivalently, every node $v\notin \MIS$\ has a neighbor in \MIS.
\end{itemize}

Unlike the previous two problems, MIS is a `local' problem, in that it can be solved and verified by nodes only communicating within a radius much smaller than the diameter of the graph. Consequently, we need not assume that the graph is connected when considering the MIS problem.

\subsection{Graph classes}\label{sec:classes}
Our algorithms will work for general undirected graphs, and do not require or use any link between graph nodes and geometric positions. However, the broadcasting and leader election algorithms will give improved results for some special families of graphs, which incorporate those arising from geometric wireless communications networks:

\paragraph{Growth-bounded graphs} (Polynomially) growth-bounded graphs are those in which, for all nodes $v$ and radii $d\in \nat$, any independent set in the $d$-hop neighborhood of $v$ has at most $poly(d)$ size. In particular, this means that any independent set in the entire graph has at most $poly(D)$ size, so growth-bounded graphs are a subset of those that have $\maxis = poly(D)$, the property that we use to obtain near-optimal running times.

While the definition of growth-bounded graphs is not geometric, it captures behavior in geometric-derived classes for the following reason: if adjacency in a graph class is based on some concept of geometric \emph{closeness}, then an independent set represents a \emph{packing} of nodes that are too \emph{distant} to be adjacent. The number of nodes that can be packed within a certain radius is then bounded by some function of that radius (and, for e.g. constant-dimensional Euclidean space, a \emph{polynomial} function).

For this reason, the class of growth-bounded graphs includes, to our knowledge, all of the geometric-inspired classes that have previously been studied to capture radio networks, including the following:

\paragraph{Unit disk graphs} Unit disk graphs are graphs in which the nodes have positions in two-dimensional Euclidean space, and edges are placed between nodes which are within distance $1$ of each other. Unit disk graphs are growth-bounded, with any independent set in the $r$-hop neighborhood of $v$ having $O(r^2)$ size.

\paragraph{Quasi unit disk graphs} Quasi unit disk graphs are a generalization of unit disk graphs, which relaxes the edge condition up to some constant factor while still requiring the graphs to be undirected. Specifically, if the Euclidean distance between two nodes is less than some $r$, then there must be an edge between them, if it is larger than some $R > r$ then there must not, and if it is between $r$ and $R$ then there may or may not be an edge. Quasi unit disk graphs are also growth-bounded: all nodes within (graph) distance $d$ of some node $v$ must be within Euclidean distance $dR$, and since any two independent nodes must be at least distance $r$ apart, there can be at most $O\left((\frac{dR}{r})^2\right)$ within a $dR$-radius of $v$. The ratio $R/r$ is generally treated as a constant, in which case this is $O(d^2)$.

\paragraph{Unit ball graphs} Unit ball graphs extend unit disk graphs to allow the underlying metric space to be any metric space. One can also define quasi unit ball graphs analogously. These graphs are again growth-bounded if the underlying metric space is \emph{doubling}. (A metric space $X$ with metric $dist$ is said to be doubling if there is some doubling constant $b \in \nat$ such that for any $x\in X$ and $r > 0$, the ball $B(x, r) := \{y | dist(x, y) < r\}$ can be covered by at most $b$ balls of radius $r/2$.) Then, for a (quasi) unit ball graph arising from such a metric space (with doubling constant $b$), the size of an independent set in the $d$-hop neighborhood of any $v$ is $d^{O(b)}$. The most obvious application of this is that (quasi) unit ball graphs in any fixed-dimensional Euclidean space are growth-bounded.

\paragraph{Geometric radio networks} Geometric radio networks are a different generalization of unit disk graphs, in which, in addition to spacial positions, each node $v$ also has a \emph{range} value $r_v$. A directed edge is then drawn from $v$ to some node $u$ if the Euclidean distance between $u$ and $v$ is at most $r_v$. Geometric radio networks are again growth bounded, so long as the ratio between the largest and smallest range values is constant. While similar to quasi unit disk graphs, the major difference is that unlike the above graph classes, geometric radio networks produce directed graphs. Since the scope of this work is limited to undirected graphs, we restrict our consideration of this class to the subclass of geometric radio networks that are also undirected.

\subsection{Notation}
We use the phrase `with high probability' to refer to events that occur with probability at least $1-n^{-c}$, for some constant $c\ge 1$. \footnote{The value of the constant is not important, since all of our uses will refer to properties of randomized algorithms, and in all cases we can amplify $c$ to be arbitrarily high by increasing the running-time of the process by a constant factor.} We use $N(v)$ to denote the set of neighbors of a node $v$, i.e. $\{u\in V: \{u,v\}\in E\}$.
\subsection{Related Work}
We first survey work on general graphs, and then on geometric-derived graph classes. 

\subsubsection{General graphs}
\paragraph{Broadcasting}
The study of randomized broadcasting began with the seminal \textsc{Decay} algorithm of Bar-Yehuda, Goldreich and Itai \cite{BGI92}, achieving an $O(D\log n +\log^2 n)$ time-step upper bound. Subsequently, Czumaj and Rytter \cite{CR06} and Kowalski and Pelc \cite{KP03} independently gave randomized algorithms taking $O(D\log\frac nD +\log^2 n)$ time-steps, matching the known $\Omega(D\log\frac nD +\log^2 n)$ lower bounds \cite{ABLP91,KM98}. However, the lower bound of Kushilevitz and Mansour \cite{KM98} assumed that spontaneous transmissions (nodes transmitting before receiving the broadcast message) were not allowed. Haeupler and Wajc \cite{HW16} exploited the capability for spontaneous transmissions to obtain an $O(D\log_D n\log\log n+\log^{O(1)}n)$ randomized broadcast algorithm for undirected networks, surpassing this lower bound. Czumaj and Davies \cite{CD21} improved this bound to $O(D\log_D n+\log^{O(1)}n)$, and also extended the algorithm to perform leader election in the same running time.

Deterministic algorithms have also been studied: in undirected networks, the fastest known algorithm is an $O(n \log D)$-time one due to Kowalski \cite{K05}, while in directed networks the $O(n\log D \log\log D)$ running time of Czumaj and Davies \cite{CD18} is the best known.

\paragraph{Leader Election}
It is well-known that leader election can be completed with high probability in $O(\log n) \times$ \emph{broadcasting time} using a binary search for the highest ID (provided that the broadcasting algorithm extends to multiple sources). Similarly, having nodes choose to become a candidate leader with $\Theta(1/n)$ probability and then performing a (multiple-source) broadcast to check whether a single leader has been elected succeeds in $O(\text{\emph{broadcasting time}})$ rounds \emph{in expectation} \cite{CD19}. Prior to the result of \cite{CD21}, Ghaffari and Haeupler \cite{GH13} achieved a running time of $O(D\log\frac{n}{D}+\log^3 n) \cdot \min\{\log\log n,\log\frac nD\}$ (for high-probability success), which remains the fastest known for a range of $D$ between $\log^2 n\log\log n$ and $\log^{O(1)} n$. 

For deterministic leader election, in undirected radio networks an $O(n\log^{3/2}n\sqrt{\log\log n})$ algorithm is known \cite{CKP12}, and an $O(n\log n\log D\log\log D)$ running time can be obtained (also in directed graphs) by applying the above binary search approach with the (multi-source) broadcasting algorithm of \cite{CD18}.

\paragraph{Maximal Independent Set}
Maximal independent set is considered one of the most fundamental symmetry-breaking tasks in distributed computing, and has seen substantial study in message-passing models (see e.g. \cite{Luby86,Ghaffari16}). In single-hop networks (those in which the underlying graph is a clique), it is equivalent to leader election, since a single successful transmission suffices for both tasks. Surprisingly, however, we do not know of any prior maximal independent set algorithm for radio networks on general graphs.

Lower bounds do exist, via a reduction to the single-hop wake-up problem (see, e.g., \cite{MW05}). This problem is as follows: $n$ nodes are arranged in a clique, but only some unknown $k\in [1,n]$ of them are \emph{activated} in time-step 0. The goal is to guarantee a successful transmission from one of these nodes (i.e., a time-step in which exactly one of the active nodes transmits, and all others do not). Any MIS algorithm which succeeds with $1-o(1)$ probability, simulated by the active nodes, must ensure such a successful transmission\footnote{It may seem that the MIS algorithm may fail because it is given $n$ as the network size, but only run on $k$ nodes. However, a correct MIS algorithm must still succeed in such a situation, since the perspective of the clique nodes is identical to that if there were $n-k$ extra isolated nodes in the network, in which case the network size parameter $n$ would be correct.}. Therefore, the $\Omega(\log^2 n)$ lower bound (for success with high probability) of Farach-Colton, Fernandes, and Mosteiro \cite{FFM06} applies.

\subsubsection{Geometric-derived graph classes}

\paragraph{Broadcasting and leader election}
Schneider and Wattenhofer \cite{SW10} give an $O(D \log n)$-round deterministic broadcasting algorithm for growth-bounded graphs \emph{with collision detection}, an $\Omega(D+\log n)$-round randomized lower bound with collision detection, and an $\Omega(n \log_{n/D} D)$ deterministic lower bound without collision detection.

Onus, Richa, Kothapalli, and Scheideler \cite{ORKS05} give an algorithm for quasi-unit disk graphs which requires $O(\Delta \log \Delta \log n + \log^4 n)$ rounds (where $\Delta$ is the maximum degree in the graph) to set up a constant-density spanner structure (using a construction from \cite{KSOR05}) and then $O(D+\log n)$ rounds (randomized) for subsequent broadcasts. However, unless $\Delta$ is significantly lower than $D$, the spanner computation exceeds the time for a single broadcast in general graphs, so is only efficient for multiple broadcasts.

Emek, Gasieniec, Kantor, Pelc, Peleg, and Su \cite{EGKPS07} study unit disk graphs, and give bounds based on their granularity $g$ (the inverse of the minimum Euclidean distance between two nodes) of $\Theta(\min\{D+g^2, D\log g\})$ for deterministic algorithms. To compare between parametrizations, note that $g=\Omega(\frac{\sqrt n}{D})$, by an area argument. So, the running time bound of \cite{EGKPS07} parameterized only by $n$ and $D$ is at least $\Omega(\min\{D+\frac{n}{D^2}, D\log n\})$ (but can be arbitrarily high).

Dessmark and Pelc \cite{DP07} work on geometric radio networks. They give an $O(D+\log n)$-round deterministic algorithm for \emph{undirected} geometric radio networks, but this algorithm requires nodes to know their spatial coordinates and have access either to collision detection or to extra initial knowledge of their neighborhoods, so does not apply to the standard ad-hoc model.

The above works all study broadcasting; we are not aware of any dedicated leader election algorithm specifically for any of the above graph classes, though it is likely that some of the above broadcasting algorithms could be adapted for the purpose.

\paragraph{Maximal Independent Set}
Algorithms for maximal independent set in geometric graphs classes are known: Moscibroda and Wattenhofer provided such an algorithm for unit disk graphs with a running time of $O(\log^9 n/\log\log n)$ \cite{MW04}, which they then improved to an optimal $O(\log^2 n)$ \cite{MW05} (the $\Omega(\log^2 n)$ lower bound mentioned above \cite{FFM06} still applies, since it is proven on a clique, which is a UDG). Daum, Ghaffari, Gilbert, Kuhn, and Newport \cite{DGGKN13} gave an algorithm for \emph{multi-channel} radio networks (which generalize the \emph{single-channel} networks we study by having multiple simultaneous communication channels, in which only transmissions on the same channel collide). Translated to single-channel networks, their running time is $O(\log^2 n)$ on bounded-independence graphs. Daum and Kuhn \cite{DK15} then both improved the running time for multi-channel networks (though it remains $O(\log^2 n)$ for single-channel networks), and extended the graph family from \emph{polynomially} growth-bounded graphs to graphs where independent sets within a radius of $d$ are bounded by \emph{any} function of $d$ (that is, for any fixed constant $d$, the size of an independent set within a $d$-radius neighborhood is bounded by a constant, but when $d$ grows with $n$ there is no nontrivial bound).

\subsection{Our Results} We extend the algorithm of \cite{CD21} to perform both broadcast and leader election in $O(D\log_D \maxis+\log^{O(1)} n)$ rounds, where $\maxis$ is the \emph{independence number} of the graph (i.e., the size of the maximum independent set). This maintains the $O(D\log_D n+\log^{O(1)} n)$ bound of \cite{CD21} in general graphs, since $\maxis\le n$, but also takes only $O(D+\log^{O(1)} n)$ rounds in graphs with $\maxis = D^{O(1)}$, which includes all growth-bounded graphs. Since $\Omega(D)$ rounds are clearly required for both broadcasting and leader election, the leading term here is optimal.

In doing so, we improve over previous results in geometric-derived graph classes, without requiring any geometric information or representation. While such prior results are often not directly comparable, they are all either slower for most parameter ranges \cite{SW10,EGKPS07,ORKS05} or require collision detection \cite{SW10,DP07} or other additional capabilities such as geometric coordinates \cite{EGKPS07,DP07}.

As part of this broadcasting and leader election algorithm, we also provide the first algorithm for maximal independent set in general-graph radio networks, achieving a running time of $O(\log^3 n)$. This comes within a $\log n$ factor of the lower bound \cite{FFM06}.

\section{Our Approach}
We achieve our new bound for broadcasting and leader election as follows: we first compute a maximal independent set, using Algorithm \ref{alg:RMIS} (see Section \ref{sec:MIS}), a radio network adaptation of Ghaffari's algorithm for MIS in the LOCAL model \cite{Ghaffari16}. The resulting MIS will serve as a set of potential cluster centers in an adaptation of the clustering algorithm of Miller, Peng and Xu \cite{MPX13}. The broadcasting and leader election algorithms then follow a similar line as those of \cite{CD21}. The main technical difference is an improvement to a crucial part of the analysis of Miller, Peng and Xu's clustering, which provides a bound on the expected distance from nodes to their cluster centers based on the number of nearby MIS nodes. This improved bound is not specific to the radio network model and may prove of independent interest.

\subsection{Discussion of the Algorithm of \cite{CD21}}

The main procedure of \cite{CD21}, which can be used to perform both broadcasting and leader election, is called \textsc{Compete$(S)$} and works as follows (Algorithm \ref{alg:C}): the procedure begins with a set $S$ of `candidates' who all hold messages, and aims to have them propagate their messages throughout the network. When two messages come into contact, the one that is lexicographically `higher' (for broadcasting and leader election, the order is not important so long as it is consistent) will override the other. By the end of the procedure, all nodes will therefore know the highest message.

	\begin{algorithm}[H]
	\caption{\textsc{Compete$(S)$ - Main Process}}
	\label{alg:C}
	\begin{enumerate}[1)]
		\item Compute a \emph{coarse clustering} using $\textsc{Partition}(\beta)$ with $\beta = D^{-0.5}$.
		\item Compute a schedule within each coarse cluster.
		\item Within each coarse cluster, for each integer $j$ with $0.01 \log D \le j \le 0.1 \log D$, compute $D^{0.2}$ different \emph{fine clusterings} using $\textsc{Partition}(\beta)$ with $\beta = 2^{-j}$.
		\item Compute schedules within all fine clusterings.
		\item Each coarse cluster center computes a $D^{0.99}$-length sequence of randomly chosen fine clusterings to use.
		\item Transmit this sequence within each coarse cluster, using the coarse cluster schedules.
		\item For each fine clustering in the sequence perform \textsc{Intra-Cluster Propagation$(O(\frac{\log n}{\beta\log D}))$} (with the value of $\beta$ corresponding to the fine clustering), terminating after $O(D\frac{\log n}{\log D})$ rounds.
	\end{enumerate}
\end{algorithm}

The algorithm heavily utilizes the $\textsc{Partition}(\beta)$ process to create clusterings. This process is a radio network implementation (due to \cite{HW16}) of Miller, Peng and Xu's graph clustering \cite{MPX13}. If applied with a fixed parameter $\beta$, it creates a clustering with cluster diameters $O(\frac{\log n}{\beta})$ with high probability. However, Haeupler and Wajc \cite{HW16} showed that if $\beta$ is chosen randomly from some range polynomial in $D$, then in expectation the distance from a node to its cluster center is only $O(\frac{\log_D n\log\log n}{\beta})$, a bound that Czumaj and Davies \cite{CD21} then improved to $O(\frac{\log_D n}{\beta})$:

	\begin{theorem}[\cite{CD21} Theorem 2.2]
	\label{thm:oldcprop}
	Let $j$ be an integer chosen uniformly at random between $0.01\log D$ and $0.1\log D$, and let $\beta = 2^{-j}$. For any node $v$, with probability at least $0.55$ (over choice of $j$), the expected distance from $v$ to its cluster center upon applying $\textsc{Partition}(\beta)$ is $O(\frac{\log_D n}{\beta})$.
\end{theorem}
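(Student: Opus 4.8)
The plan is to treat the two sources of randomness separately: the shifts internal to $\textsc{Partition}(\beta)$, and the choice of $j$ (hence $\beta = 2^{-j}$). Writing $R_\beta := \operatorname{dist}(v, c(v))$ for the random distance from $v$ to its cluster center, the theorem is a statement about the inner expectation $X_j := \mathbb{E}[R_{\beta}]$ (over shifts) that must hold for a $0.55$-fraction of the $\Theta(\log D)$ admissible values of $j$ (call this set $J$, so $|J| = \Theta(\log D)$). The natural route to such a ``most scales'' conclusion is Markov's inequality over the uniform choice of $j$: if I can show that the average normalized displacement satisfies
\[
\frac{1}{|J|}\sum_{j\in J}\beta_j X_j = O(\log_D n),\qquad\text{equivalently}\qquad \sum_{j\in J}\beta_j\, X_j = O(\log n),
\]
then at most a $0.45$-fraction of $j$ can have $\beta_j X_j$ exceeding a suitable constant multiple of $\log_D n$, which is exactly $X_j = O(\tfrac{\log_D n}{\beta_j})$ for the remaining $\ge 0.55$-fraction. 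So the entire theorem reduces to the scale-summed bound $\sum_{j\in J}\beta_j X_j = O(\log n)$.

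First I would recall the exponential-shift description of $\textsc{Partition}(\beta)$: each node $u$ draws $\delta_u \sim \mathrm{Exp}(\beta)$ independently, and $c(v) = \arg\min_u(\operatorname{dist}(u,v)-\delta_u) = \arg\max_u(\delta_u - \operatorname{dist}(u,v))$. Then $X_j = \sum_{d\ge 1}\Prob{R_{\beta}\ge d}$, and the center lies at distance $\ge d$ precisely when some node outside the ball $B(v,d-1)$ attains a larger value of $\delta_u - \operatorname{dist}(u,v)$ than every node inside it. Writing $A := \max_{u\in B(v,d-1)}(\delta_u - \operatorname{dist}(u,v))$ for the best ``near'' value and conditioning on it, memorylessness of the exponential gives, for a far node at distance $d_u$, $\Prob{\delta_u - d_u > A \mid A} = e^{-\beta(d_u + A)}$; a union bound over far nodes followed by taking the expectation over $A$ yields a bound of the shape
\[
\Prob{R_\beta \ge d}\;\le\;\min\!\Bigl(1,\; \mathbb{E}\bigl[e^{-\beta A}\bigr]\sum_{d'\ge d} a_{d'}\,e^{-\beta d'}\Bigr),
\]
where $a_{d'} = |\{u:\operatorname{dist}(u,v)=d'\}|$ and the near-factor $\mathbb{E}[e^{-\beta A}]$ is controlled by a weighted count of the near-neighborhood of $v$ (since $A$ is a maximum of shifted exponentials, a direct Beta-integral computation bounds $\mathbb{E}[e^{-\beta A}]$ by $O(1/\sum_{u\in B(v,d-1)}e^{-\beta d_u})$).

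The crux is then to sum these per-scale bounds over $j \in J$ and obtain only $O(\log n)$ rather than the naive $O(\log n\cdot\log D)$; note that an individual scale can genuinely cost $\Theta(\log n/\beta)$, so there is no slack to lose. Two features make the saving possible. First, the truncation at $1$ is essential: the distances $d$ small enough that the raw bound exceeds $1$ must be charged directly, and summing $\beta_j$ over them contributes only $O(1)$ per scale, hence $O(|J|)=O(\log D)=O(\log n)$ in total. Second, after swapping the order of summation the far sum factorizes through $\sum_{j\in J}\beta_j e^{-\beta_j d'}$, and since $\beta_j = 2^{-j}$ runs over a geometric progression this collapses to $O(1/d')$ (a geometric Riemann-sum estimate of $\int x e^{-x}\,\tfrac{dx}{x}$). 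Carrying the near-neighborhood factor through, the net cost of each scale can be organized into a telescoping form, roughly $O\bigl(1 + \log\tfrac{N(\mathrm{out}_j)}{N(\mathrm{in}_j)}\bigr)$ with $N(\cdot)=|B(v,\cdot)|$ and $\mathrm{in}_j,\mathrm{out}_j$ the consecutive radii of the geometric scale; summing the log-ratios telescopes to $O(\log N(\cdot)) = O(\log n)$.

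I expect this final step — converting the raw tail estimates into a genuinely telescoping sum, with the cap-at-$1$ truncation and the neighborhood-growth factors handled simultaneously and correctly — to be the main obstacle, since it is exactly here that the worst-case $\log D$ overcharge must be shown to concentrate on the few scales where the neighborhood of $v$ actually grows. By contrast, the exponential-competition tail bound of the second paragraph and the closing Markov argument of the first are comparatively routine, modulo verifying the Beta-integral estimate for $\mathbb{E}[e^{-\beta A}]$.
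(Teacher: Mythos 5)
Your overall architecture is genuinely different from the paper's, and parts of it are sound. The paper (whose machinery for Theorem \ref{thm:cprop}, run with all nodes in place of \MIS\ nodes, is precisely its proof of this statement, in fact with $0.77$ in place of $0.55$) never averages over scales: it bounds the expected distance by $5S_\beta$ with $S_\beta = \frac{\sum_i i x_i e^{-i\beta}}{\sum_i x_i e^{-i\beta}}$ (Lemma \ref{lem:edist}), declares a scale $j$ good via a growth condition on ball sizes at radii $2^{j+\log b}$ and beyond, where $b = \Theta(\log_D n)$ (Lemma \ref{lem:cl:goodj}), and bounds the \emph{number} of bad scales by a telescoping product of growth ratios, which is at most the total node count (Lemma \ref{cl:goodj-cond}). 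You instead reduce, via Markov over the uniform choice of $j$, to the first-moment bound $\sum_{j\in J}\beta_j X_j = O(\log n)$. That reduction is valid; your tail bound is also correct, since $\Prob{A<a}\le \exp(-We^{-\beta a})$ with $W=\sum_{w\in B(v,d-1)}e^{-\beta\, dist(w,v)}$ gives $\mathbb{E}[e^{-\beta A}]\le 1/W$ by substitution, so $\Prob{R_\beta\ge d}\le\min(1,\mathrm{tail}/\mathrm{head})$ (essentially a re-derivation of Lemma \ref{lem:edist}); and the first-moment bound you target is, in fact, true.

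The gap is in the crux step, exactly where you predicted difficulty: both specific claims you make there are false. Take $v$ at the end of a path of length $D$, attach $M$ pendant nodes at distance $\ell$ from $v$, with $\ln M=\Theta(\log n)\gg\log D$, and choose $\ell$ so that $\beta_{j^*} M e^{-\beta_{j^*}\ell}=4$ for some admissible $j^*$; then $\ell=\Theta(2^{j^*}\log n)$. At scale $j^*$ your raw bound $\mathrm{tail}/\mathrm{head}$ exceeds $1$ for \emph{every} $d\le \ell$, so the capped region alone contributes $\beta_{j^*}\ell=\Theta(\log n)$ to $\beta_{j^*}X_{j^*}$ --- not $O(1)$ per scale --- and this is not slack in the bound, since with constant probability the winning center really does lie in the blob. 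The same example kills the telescoping: at scale $j^*$ the ratio $N(2^{j^*+1})/N(2^{j^*})$ is $O(1)$, because the mass responsible for the cost sits at distance $\ell\approx 2^{j^*}\log n$, far outside the window $[2^{j^*},2^{j^*+1}]$. In general the mass governing scale $j$ can lie anywhere out to radius $\Theta(2^j\log n)$ and beyond, so the windows relevant to different scales overlap heavily and consecutive-radius log-ratios do not telescope; this is exactly why the paper's growth condition is imposed at radius $2^{j+\log b}$ (not $2^j$) and for all shifts $r\ge 8$ beyond it, and why its accounting is a product over bad runs compared against the global node count rather than a per-window telescope. A correct completion of your plan needs this missing charging argument: for instance, a scale of cost $c_j$ forces the dominant mass at that scale to have logarithm at least $c_j$, and the factor-$2$ separation between consecutive $\beta_j$ forces these dominant log-masses to increase by at least $c_{j'}-\frac12 c_j$ from one costly scale to the next, whence $\sum_j c_j\le 2\ln n + O(\log n)$. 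That argument is a reorganized form of Lemmas \ref{lem:cl:goodj} and \ref{cl:goodj-cond}, and it --- not the window telescoping you describe --- is the real content of the proof.
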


This is exploited by creating a collection of \emph{fine clusterings}, with values of $\beta$ within this range, and cycling between them at random. Communication throughout the network is then done by performing \textsc{Intra-Cluster Propagation}, which facilitates fast communication within the \emph{fine} clusters, using the schedules from \cite{GHK13} as implemented in \cite{HW16} (Algorithm \ref{alg:ICP}).

However, to construct and use these fine clusterings, the random choices of $\beta$ must be coordinated between nodes, and to do this, the algorithm first employs a \emph{coarse} clustering. The coarse clustering uses the same $\textsc{Partition}(\beta)$ process, but without randomizing $\beta$: this means that it is too slow to use for global communication (broadcast and leader election using it would take $O(D \log n + \log^{O(1)} n)$ rounds), but suffices to locally agree values of $\beta$ for the fine clusterings.

There are other components to the algorithm of \cite{CD21}: a background process, run concurrently with Algorithm \ref{alg:C} via time multiplexing, for passing messages over coarse cluster boundaries, and the details of the \textsc{Intra-Cluster Propagation} procedure. However, the only changes we make in order to achieve the new dependence on independence number are to \textsc{Compete$(S)$} and $\textsc{Partition}(\beta)$, so we need not be too concerned with the workings of these other components. We give a brief summary in Appendix \ref{app:components}; for further detail, see \cite{CD21}.

\subsection{Changes to the Algorithm of \cite{CD21}}\label{sec:changes}

We make one major change to the algorithm of \cite{CD21}, which concerns how the clusterings are created. In \cite{CD21}, the $\textsc{Partition}(\beta)$ process of \cite{HW16} is used, which is a radio-network implementation of the following clustering process of Miller, Peng, and Xu \cite{MPX13}:

\begin{itemize}
	\item Each node $v$ independently chooses a random variable $\delta_v$ from an exponential distribution with parameter $\beta$.
	\item Each node $u$ joins the cluster of the node $v$ minimizing $dist(u,v)-\delta_v$.
\end{itemize}

To exploit the properties of growth-bounded graphs, we make the following change: we first compute a maximal independent set \MIS\ using our new procedure \textsc{ComputeMIS} (Algorithm \ref{alg:RMIS}, from Section \ref{sec:MIS}). Then, we replace the standard $\textsc{Partition}(\beta)$ with a variant $\textsc{Partition}(\beta,\MIS)$ that uses \emph{only \MIS\ nodes} as potential cluster centers, i.e.:

\begin{itemize}
	\item Each \MIS\ node $v$ independently chooses a random variable $\delta_v$ from an exponential distribution with parameter $\beta$.
	\item Each node $u$ joins the cluster of the \MIS\ node $v$ minimizing $dist(u,v)-\delta_v$.
\end{itemize}

It can easily be seen that this change does not materially affect the radio network implementation of \cite{HW16}, and the clusterings of $\textsc{Partition}(\beta,\MIS)$ can still be computed in $O(\frac{\log^{O(1)}n}{\beta})$ rounds.

Our updated variant of the main \textsc{Compete} procedure of \cite{CD21} is then as follows:

\begin{algorithm}[H]
	\caption{\textsc{Compete$(S)$}}
	\label{alg:CC}
	\begin{enumerate}[1)]
		\item $\MIS \gets \textsc{ComputeMIS}$.		
		\item Compute a \emph{coarse clustering} using $\textsc{Partition}(\beta,\MIS)$ with $\beta = D^{-0.5}$.
		\item Compute a schedule within each coarse cluster.
		
		\item Within each coarse cluster, for each integer $j$ with $0.01 \log D \le j \le 0.1 \log D$, compute $D^{0.2}$ different \emph{fine clusterings} using $\textsc{Partition}(\beta,\MIS)$ with $\beta = 2^{-j}$.
		\item Compute schedules within all fine clusterings.
		\item Each coarse cluster center computes a $D^{0.99}$-length sequence of randomly chosen fine clusterings to use.
		\item Transmit this sequence within each coarse cluster, using the coarse cluster schedules.
		\item For each fine clustering in the sequence perform \textsc{Intra-Cluster Propagation$(O(\frac{\log_D \maxis}{\beta}))$} (with the value of $\beta$ corresponding to the fine clustering), terminating after $O(D\log_D \maxis)$ rounds.
	\end{enumerate}
\end{algorithm}

The result of this change is that we can improve over Theorem 2.2 of \cite{CD21}: when clusterings are generated using $\textsc{Partition}(\beta,\MIS)$, we can show the following theorem:

\begin{theorem}
	\label{thm:cprop}
	Let $j$ be an integer chosen uniformly at random between $0.01\log D$ and $0.1\log D$, and let $\beta = 2^{-j}$. For any node $v$, with probability at least $0.77$ (over choice of $j$), the expected distance from $v$ to its cluster center upon applying $\textsc{Partition}(\beta)$ is $O(\frac{\log_D \maxis}{\beta})$.
\end{theorem}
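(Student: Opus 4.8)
The plan is to follow the skeleton of the proof of Theorem~\ref{thm:oldcprop} in \cite{CD21}, but to track precisely where the network size $n$ enters and to replace it throughout by a bound depending on $\maxis$. The first step is a tail bound on the distance to the cluster center for a \emph{fixed} $\beta$. Fix a node $v$ and list the candidate centers (the \MIS\ nodes) as $w_1,w_2,\dots$ in order of increasing distance $d_i := dist(v,w_i)$, so that $v$ joins the cluster maximizing $\delta_{w_i}-d_i$. Since each $\delta_{w_i}$ is exponential with parameter $\beta$, in order for a center $w_i$ to beat the closest one it must satisfy $\delta_{w_i} > d_i - d_1$, an event of probability $e^{-\beta(d_i-d_1)}$. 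Crucially, because \MIS\ is \emph{maximal}, every node has an \MIS\ node within distance $1$ (itself or a neighbor), so $d_1\le 1$; this plays the role of the trivial fact, used in \cite{CD21}, that in the unrestricted process a node is its own candidate center. A union bound then gives
\begin{equation*}
\Prob{dist(v,\text{center}) > d} \le e^{\beta}\sum_{i:\, d_i > d} e^{-\beta d_i}.
\end{equation*}

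Next I would bound the center counts via the independence number. Let $b(r)$ be the number of \MIS\ nodes within distance $r$ of $v$. These nodes lie in the $r$-hop neighborhood of $v$ and form an independent set, so $b(r)$ is at most the independence number of that neighborhood, and hence $b(r)\le\maxis$ for every $r$. Substituting this (in place of the trivial $b(r)\le n$) into the tail bound and integrating $\min\{1,\cdot\}$ over $d\in[0,\infty)$ yields, for a fixed scale, an expected distance of $O\!\left(\frac{\log(\maxis/\beta)}{\beta}\right)$: the tail bound crosses $1$ at a critical distance $d_c \approx \frac{\ln(\maxis/\beta)}{\beta}$, and the exponential tail beyond $d_c$ contributes only $O(1/\beta)$. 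This already replaces $n$ by $\maxis$ in the per-scale guarantee; note the $\log(1/\beta)=O(\log D)$ part is subsumed, and is in any case $O(\log\maxis)$ in the regimes of interest.

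The main work, and the main obstacle, is to upgrade the per-scale bound $O\!\left(\frac{\log\maxis}{\beta}\right)$ to the much stronger $O\!\left(\frac{\log_D\maxis}{\beta}\right)$ that holds for a constant fraction of the scales $j$. Here I would re-run the counting argument of \cite{CD21} with the $\maxis$-based counts. For scale $\beta=2^{-j}$ the expected distance is governed by $\frac1\beta$ times $\ln b$ evaluated at the critical distance $d_c(j)\approx 2^{j}\ln b(d_c(j))$, so a scale $j$ is ``bad'' only when $b$ grows by a large multiplicative factor across the dyadic band it controls. As $j$ sweeps $[0.01\log D,\,0.1\log D]$ the critical distances form an increasing geometric sequence covering roughly $[D^{0.01},D^{0.1}]$, and across this entire range the monotone quantity $\ln b$ can increase by a total of at most $\ln\maxis$ (since $1\le b\le\maxis$ everywhere, the lower bound again using maximality). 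Distributing this total log-growth over the $\Theta(\log D)$ available scales, the average per-scale growth is $O\!\left(\frac{\ln\maxis}{\log D}\right)=O(\log_D\maxis)$, and a Markov-type counting argument caps the number of scales whose local growth exceeds a fixed multiple of the average, leaving a constant fraction of good scales. The sharper count (independence number inside each ball rather than $n$) together with the clean $d_1\le1$ bound is what I expect tightens the good-scale fraction from the $0.55$ of Theorem~\ref{thm:oldcprop} up to $0.77$. The delicate point will be accounting precisely for the overlap and boundaries of the dyadic bands so that the constant comes out as stated; this is exactly the portion of the \cite{CD21} analysis that must be redone, and I would verify that no step there relied on any property of $n$ beyond its being an upper bound on the number of candidate centers within a neighborhood.
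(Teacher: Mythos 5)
Your overall plan correctly isolates the two places where the \MIS\ structure must enter: maximality gives every node a candidate center within distance $1$ (your $d_1\le 1$, playing the role of the paper's fact $s_0\ge 1$), and independence caps the total number of candidates by $\maxis$ (the paper's $s_{\log D}\le\maxis$). Your final counting step --- total multiplicative growth of the center count is at most $\maxis$, spread over $\Theta(\log D)$ scales, with a Markov-type cap on the number of bad scales --- is essentially the paper's Lemma~\ref{cl:goodj-cond}; indeed the constant $0.77$ in the paper comes purely from that count, as $1-0.02/(0.1-0.01)$, not from any sharpening of the per-scale estimate.

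However, there is a genuine gap in how you bound the per-scale expected distance, and it breaks the bridge between ``good scale'' and the claimed bound. Your step 1 is a plain union bound truncated at $1$, namely $E[\mathrm{dist}] \le \int_0^\infty \min\bigl\{1,\; e^\beta \sum_{i:\, d_i > d} e^{-\beta d_i}\bigr\}\,dd$. This ignores the \emph{competition} among candidate centers, and no scale-counting argument can repair it, because its critical distance is governed by the \emph{absolute} number of centers inside the critical radius, not by the growth of that number across dyadic bands. Concretely: let $v$ have $\Theta(\maxis)$ \MIS\ neighbors at distance $1$, and place $m=\Theta(\maxis)$ further \MIS\ nodes at distance $R$ with $me^{-\beta R}=1$, i.e.\ $R=\Theta(\frac{\ln \maxis}{\beta})$ (say, fanned out at the end of a path hanging off one neighbor). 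The center count never grows by more than a constant factor across any band, so every scale is ``good'' in your framework; yet your integrand equals $1$ for all $d<R$, so your bound is $\Theta(\frac{\log\maxis}{\beta})$ --- a $\Theta(\log D)$ factor above the target $O(\frac{\log_D\maxis}{\beta})$. (The true expected distance there is $O(1)$: a far center wins only by beating the maximum of $\Theta(\maxis)$ i.i.d.\ exponentials held by the nearby centers, which happens with probability $O(1/\maxis)$.) What is missing is precisely the paper's Lemma~\ref{lem:edist}, the \emph{ratio} bound $E[\mathrm{dist}]\le 5T_\beta/B_\beta$ with $B_\beta=\sum_i \mis_i e^{-i\beta}$ in the denominator: in the proof of Lemma~\ref{lem:cl:goodj}, the inequality $s_{j+\log b}\le e^b B_{2^{-j}}$ is what cancels the (possibly enormous) nearby count, so that only \emph{relative} growth matters, which is exactly what Lemma~\ref{cl:goodj-cond} controls. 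Replacing your union bound by this ratio bound (inherited unchanged from Lemma 6.1 of \cite{CD21}) is necessary for your step 3 to go through.
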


The other change in Algorithm \ref{alg:CC} is that we have shortened the length of time that \textsc{Intra-Cluster Propagation} is run to match the new analysis.

\section{Analysis}
In this section, we discuss the changes, simplifications, and extensions made to the analysis of \cite{CD21} in order to achieve our new running-time result. As there, we use quantities $T_{\beta}$, $B_{\beta}$ and $S_{\beta}$ to bound the expected distance from any fixed $v$ to its cluster center after applying $\textsc{Partition}(\beta,\MIS)$. However, their definitions are slightly altered to account for the modified clustering procedure. For fixed $v$ and computed maximal independent set \MIS, let $\mis_i$ denote the number of \MIS\ nodes at distance $i$ from $v$. Then, denote $T_{\beta} := \sum_{i=0}^{D} i \mis_i e^{-i\beta}$, $B_{\beta} := \sum_{i=0}^{D} \mis_i e^{-i\beta}$, and $S_{\beta} := \frac{T_{\beta} }{B_{\beta}} = \frac{\sum_{i=0}^{D} i \mis_i e^{-i\beta}}{\sum_{i=0}^{D} \mis_i e^{-i\beta}}$.

We start with a lemma showing that the expected distance from a node to its cluster center can be bounded in terms of $S_{\beta}$:
	
\begin{lemma}
	\label{lem:edist}
	For any fixed node $v$ and value $\beta\le D^{-0.01}$, the expected distance from $v$ to its cluster center upon applying $\textsc{Partition}(\beta,\MIS)$ is at most $\frac{5 \sum_{i=0}^{D} i \mis_i e^{-i\beta}}{\sum_{i=0}^{D} \mis_i e^{-i\beta}} = 5S_{\beta}$.
\end{lemma}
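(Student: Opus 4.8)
The plan is to track, for the fixed node $v$, the random MIS node $w^\ast$ whose cluster it joins: by definition of $\textsc{Partition}(\beta,\MIS)$, $w^\ast$ is the MIS node maximizing $\phi_w := \delta_w - dist(v,w)$. Writing $d^\ast := dist(v,w^\ast)$ for the distance to the center, I want to bound $\mathbb{E}[d^\ast]$. I would compute this via the tail-sum identity $\mathbb{E}[d^\ast] = \sum_{r=1}^{D}\Prob{d^\ast \ge r}$, and for each $r$ split the MIS nodes into the \emph{near} set (distance $<r$) and the \emph{far} set (distance $\ge r$). Since $w^\ast$ is the global maximizer of $\phi$, up to the measure-zero tie event we have the clean identity $\{d^\ast \ge r\} = \{\max_{d_w\ge r}\phi_w > \max_{d_w<r}\phi_w\}$. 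Setting $B_{<r} := \sum_{i<r}\mis_i e^{-i\beta}$ and $B_{\ge r} := \sum_{i\ge r}\mis_i e^{-i\beta}$ (so $B_{<r}+B_{\ge r}=B_{\beta}$), the goal becomes the per-radius estimate $\Prob{d^\ast \ge r} \le B_{\ge r}/B_{<r}$, together with the trivial bound $\Prob{d^\ast\ge r}\le 1$.

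The hard part, and the technical heart of the argument, is this per-radius estimate. I would condition on the randomness of the near nodes, which fixes $C := \max_{d_w<r}\phi_w$, and note that the far nodes' variables are then independent of $C$. A union bound gives $\Prob{\max_{d_w\ge r}\phi_w > C \mid C} \le \sum_{d_w\ge r}\Prob{\phi_w>C\mid C} \le e^{-\beta C}\sum_{d_w\ge r}e^{-\beta\, dist(v,w)} = e^{-\beta C}\,B_{\ge r}$, using $\Prob{\phi_w>c}=\min(1,e^{-\beta c}e^{-\beta\, dist(v,w)})$. It then remains to show $\mathbb{E}[e^{-\beta C}]\le 1/B_{<r}$. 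The key observation I would use is that for $\delta_w\sim\mathrm{Exp}(\beta)$ the variable $U_w := e^{-\beta\delta_w}$ is uniform on $(0,1)$, so $e^{-\beta C}=\min_{d_w<r} e^{\beta\, dist(v,w)}U_w$. Writing this expectation as $\int_0^\infty \prod_{d_w<r}(1-t\,e^{-\beta\, dist(v,w)})_+\,dt$ and applying $1-x\le e^{-x}$ factorwise collapses the product to $e^{-t B_{<r}}$, whose integral is exactly $1/B_{<r}$. Combining these steps yields $\Prob{d^\ast\ge r}\le B_{\ge r}/B_{<r}$.

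Finally I would assemble the tail sum by splitting at the threshold $r_0$, the largest $r$ with $B_{\ge r}>B_{\beta}/2$. For the tail $r>r_0$ I use $\Prob{d^\ast\ge r}\le B_{\ge r}/B_{<r}\le 2B_{\ge r}/B_{\beta}$, and the telescoping identity $\sum_{r=1}^{D}B_{\ge r}=\sum_i i\,\mis_i e^{-i\beta}=T_{\beta}$ bounds this contribution by $2T_{\beta}/B_{\beta}=2S_{\beta}$. For the head $r\le r_0$ I use the trivial bound $1$, contributing $r_0$; here $T_{\beta}\ge r_0 B_{\ge r_0} > r_0 B_{\beta}/2$ gives $r_0 < 2S_{\beta}$. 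Adding the two pieces gives $\mathbb{E}[d^\ast] < 4S_{\beta}\le 5S_{\beta}$, as required. I would note that $B_{\beta}>0$ (so $S_\beta$ is well defined) follows from maximality of $\MIS$: every $v$ has an MIS node within distance $1$, so $B_{\beta}\ge e^{-\beta}>0$; this also guarantees the near set is nonempty whenever $r\ge 2$, while the $r=1$ term is absorbed by the trivial bound. I do not expect to need the hypothesis $\beta\le D^{-0.01}$ for this particular bound, beyond ensuring the regime in which the lemma is later invoked.
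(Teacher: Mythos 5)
Your proof is correct: the per-radius estimate $\Prob{d^\ast \ge r}\le B_{\ge r}/B_{<r}$ (obtained by conditioning on the near maximum $C$, union-bounding the far nodes, and computing $\mathbb{E}[e^{-\beta C}]\le 1/B_{<r}$ via the uniform representation $e^{-\beta\delta_w}\sim U(0,1)$), the telescoping identity $\sum_{r=1}^{D}B_{\ge r}=T_{\beta}$, and the head/tail split at the median-type radius $r_0$ all check out, and in fact yield the slightly stronger constant $4$ without ever invoking the hypothesis $\beta\le D^{-0.01}$. This is essentially the same approach as the proof the paper relies on---the paper does not reproduce an argument but defers verbatim to Lemma 6.1 of \cite{CD21}, which likewise proceeds by tail-probability analysis of the distance to the argmax of the shifted exponentials---so your self-contained derivation can stand in for the cited one.
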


The proof is identical to that of Lemma 6.1 of \cite{CD21}, other than the notational change from $x_i$ to $\mis_i$; this change does not affect the proof itself. The main technical change is Lemma \ref{lem:cl:goodj}, which is both stronger and has a substantially simpler proof than the equivalent result from \cite{CD21} (for which the proof runs to $4$ pages).

Let $b = 2^{\lceil\log_2\log_D \maxis \rceil+2}$; notice that $b$ is an integer power of $2$ with $2\le 4\log_D \maxis\le b \le 8\log_D \maxis$. Furthermore, for $j \in \nat$, let $s_{j}$ denote $\sum\limits_{i=0}^{2^{j+1}}\mis_i$.

The next lemma essentially says that if the number of $\MIS$ nodes does not expand too quickly just outside the radius $\frac{\log_D \maxis}{\beta}$ around $v$, then we indeed have the desired $O(\frac{\log_D \maxis}{\beta})$ bound on the expected distance from $v$ to its cluster center under $\textsc{Partition}(\beta,\MIS)$. 

\begin{lemma}\label{lem:cl:goodj}
	If for some $j\in \nat$ and for  all $r\ge 8$, $s_{j+\log b+r}\le 2^{b2^{r-1}}s_{j+\log b}$, then when $\beta=2^{-j}$, $S_{\beta} = O(b2^j) = O(\frac{\log_D \maxis}{\beta})$
\end{lemma}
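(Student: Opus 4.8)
The plan is to bound $S_\beta = T_\beta/B_\beta$ directly, where $\beta = 2^{-j}$. Write $m_0 := j + \log b$, so that $2^{m_0} = b2^j$ is exactly the target radius $\Theta(\frac{\log_D \maxis}{\beta})$ (using $2\le b\le 8\log_D\maxis$ and $2^j = 1/\beta$), and observe that $s_{m_0} = \sum_{i=0}^{2b2^j}\mis_i$ counts the $\MIS$ nodes within distance $2\cdot 2^{m_0}$. I would split the numerator $T_\beta = \sum_i i\,\mis_i e^{-i\beta}$ at the threshold distance $2^{m_0+8} = 256\, b2^j$ into a near part $T^{\mathrm{near}}$ (over $i \le 2^{m_0+8}$) and a far part $T^{\mathrm{far}}$ (over $i > 2^{m_0+8}$), and show that each is $O(b2^j)\cdot B_\beta$.

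The near part is immediate: every term in $T^{\mathrm{near}}$ has $i \le 256\,b2^j$, so $T^{\mathrm{near}} \le 256\,b2^j \sum_{i\le 2^{m_0+8}} \mis_i e^{-i\beta} \le 256\,b2^j\, B_\beta$. For the far part I would decompose the range $i > 2^{m_0+8}$ into dyadic shells, letting shell $r$ (for $r\ge 8$) be the distances $i \in (2^{m_0+r}, 2^{m_0+r+1}]$. The number of $\MIS$ nodes in shell $r$ is at most $s_{m_0+r}$; each such node has $i \le 2^{r+1} b2^j$ and weight $e^{-i\beta} < e^{-2^r b}$ (since $i\beta > 2^r b$), so shell $r$ contributes at most $s_{m_0+r}\cdot 2^{r+1}b2^j\cdot e^{-2^r b}$ to $T_\beta$. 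The hypothesis $s_{m_0+r}\le 2^{b2^{r-1}} s_{m_0}$ (valid precisely for $r \ge 8$, which is exactly why the far part starts at this shell) lets me replace $s_{m_0+r}$ by $2^{b2^{r-1}}s_{m_0}$, and the key cancellation is that $2^{b2^{r-1}}e^{-2^r b} = e^{-b2^{r-1}(2-\ln 2)}$, which still decays double-exponentially in $r$ because $2 - \ln 2 > 1$. Summing, $T^{\mathrm{far}} \le s_{m_0}\, b2^j \sum_{r\ge 8} 2^{r+1} e^{-b2^{r-1}(2-\ln 2)}$.

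To convert this into a multiple of $B_\beta$ I would use the crude lower bound $B_\beta \ge \sum_{i\le 2^{m_0+1}}\mis_i e^{-i\beta} \ge e^{-2b}s_{m_0}$ (since $i\le 2\cdot 2^{m_0}=2b2^j$ forces $e^{-i\beta}\ge e^{-2b}$), giving $s_{m_0} \le e^{2b}B_\beta$. The step I expect to be the crux is checking that the surviving double-exponential decay overwhelms this $e^{2b}$ slack: the dominant $r=8$ term of the sum is $2^9 e^{-128 b(2-\ln 2)}$, and multiplying by $e^{2b}$ still leaves $2^9 e^{-(128(2-\ln2)-2)b} = o(1)$ for $b \ge 2$, with later terms negligible. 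Hence $T^{\mathrm{far}} = o(b2^j)\cdot B_\beta$, and combining with the near bound gives $T_\beta = O(b2^j)\,B_\beta$, i.e. $S_\beta = O(b2^j) = O(\frac{\log_D\maxis}{\beta})$. The only delicate point is this numerical comparison of the allowed growth rate against the exponential decay of the weights; everything else is a routine dyadic split, and it is exactly the double-exponential margin ($2^{r-1}\ge 128$ already at $r=8$) that makes the crude lower bound on $B_\beta$ sufficient and avoids any need to control the nearer shells.
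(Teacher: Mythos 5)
Your proof is correct and follows essentially the same route as the paper's: both split $T_\beta$ at radius $\Theta(b2^j)$, bound the far part over dyadic shells using the hypothesis to cancel the growth factor $2^{b2^{r-1}}$ against the exponential decay $e^{-b2^{r}}$, and then convert $s_{j+\log b}$ into $e^{O(b)}B_\beta$ to conclude $S_\beta = O(b2^j)$. The only differences are bookkeeping (your split at $256\,b2^j$ versus the paper's $128\,b2^j$, and a slightly tighter shell-count bound yielding the margin $2-\ln 2$ rather than the paper's $1-\ln 2$), which do not change the argument.
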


\begin{proof}
	Let $q' :=  \log b + 7$. We bound the numerator $T_{2^{-j}}$ of $S_{2^{-j}}$:
	\begin{align*}
	T_{2^{-j}}
	&=
	\sum_{i=0}^D i \mis_i e^{-i2^{-j}}\\
	&=
	\sum_{i=0}^{2^{j+q'}} i \mis_i e^{-i2^{-j}} + \sum_{i=2^{j+q'}+1}^{D} i \mis_i e^{-i2^{-j}}\\
	&\le
	2^{j+q'}\sum_{i=0}^{2^{j+q'}} \mis_i e^{-i2^{-j}} + \sum_{q=q'}^{\infty}\sum_{i=2^{j+q}+1}^{2^{j+q+1}} i \mis_i e^{-i2^{-j}}\\
	&\le b2^{j+7}B_{2^{-j}} + \sum_{q=q'}^{\infty}2^{j+q+1}\sum_{i=2^{j+q}+1}^{2^{j+q+1}}  \mis_i e^{-2^{j+q} \cdot2^{-j}}\\
	&\le b2^{j+7}B_{2^{-j}} + \sum_{q=q'}^{\infty}2^{j+q+1}\sum_{i=0}^{2^{j+q+1}}  \mis_i e^{-2^{q}}\\
	&= b2^{j+7}B_{2^{-j}} + 2^{j+1}\sum_{q=q'}^{\infty}2^{q}e^{-2^{q}}s_{j+q+1}
	\enspace.
	\end{align*}
	
	We apply the condition of the lemma to upper-bound the terms $s_{j+q+1}$ by $2^{b2^{q-\log b}}s_{j+\log b} = 2^{2^{q }}s_{j+\log b}$:
	
	\begin{align*}
	T_{2^{-j}}
	&\le b2^{j+7}B_{2^{-j}} + 2^{j+1}\sum_{q=q'}^{\infty}2^{q}e^{-2^{q}}2^{2^{q }}s_{j+\log b}\enspace.
	\end{align*}
	
	Next, we bound $s_{j+\log b}$ in terms of $B_{2^{-j}}$:
	
	\begin{align*}
	s_{j+\log b} &= e^b\sum\limits_{i=0}^{2^{j+\log b}}\mis_i e^{-b} = 
	e^b\sum\limits_{i=0}^{2^{j+\log b}}\mis_i e^{-b2^j\cdot 2^{-j}}
	\le e^b\sum\limits_{i=0}^{2^{j+\log b}}\mis_i e^{-i\cdot 2^{-j}}
	\le e^b \sum\limits_{i=0}^{d}\mis_i e^{-i 2^{-j}}
	= e^b B_{2^{-j}}
	\enspace.
	\end{align*}
	
	So plugging this into our above bound for $T_{2^{-j}}$,
	\begin{align*}
	T_{2^{-j}}
	&\le b2^{j+7}B_{2^{-j}} + 2^{j+1}\sum_{q=q'}^{\infty}2^{q}e^{-2^{q}}2^{2^{q }}e^b B_{2^{-j}}
	= b2^{j+7}B_{2^{-j}} + 2^{j+1}B_{2^{-j}}\sum_{q=q'}^{\infty}2^{q}e^{b-2^{q}}2^{2^{q }}\enspace.
	\end{align*}

	It remains to show that $\sum_{q=q'}^{\infty}2^{q} e^{b-2^q}2^{b2^{q}}=O(1)$:
	
	\begin{align*}
	\sum_{q=q'}^{\infty}2^{q} e^{b-2^q}2^{2^{q}} 
	&=
	\sum_{r=3}^{\infty}2^{r+\log b} e^{b-2^{r+\log b}}2^{2^{{r+\log b}}} 
	=
	be^b\sum_{r=3}^{\infty}2^{r} e^{-b2^{r}}2^{b2^{r}} 
	\le
	be^b\sum_{r=3}^{\infty}2^{r} 1.3^{-b2^{r}}\enspace.
	\end{align*}
	
	For $b\ge 2$, this is decreasing in $b$ and is maximized at $b=2$:
	\begin{align*}
	\sum_{q=q'}^{\infty}2^{q} e^{b-2^q}2^{2^{q}}
	&\le2e^2\sum_{r=3}^{\infty}2^{r} 1.3^{-2^{r+1}}\enspace.
	\end{align*}
	
	For $r\ge 3$ we have $1.3^{2^{r+1}}\ge 2^{2r}$. So, 
	
	\begin{align*}\sum_{q=q'}^{\infty}2^{q} e^{b-2^q}2^{2^{q}}
	\le 2e^2\sum_{r=3}^{\infty}2^{r} 2^{-2r}\le e^2/2 < 3\enspace,\end{align*} 
	
	and we therefore reach our final bound: \[S_{\beta} \le b2^{j+7}B_{2^{-j}} + 3 \cdot 2^{j+1}B_{2^{-j}} = O(b2^{j})\enspace.\]
\end{proof}

Next, we show that there are many $j$ for which the condition of Lemma \ref{lem:cl:goodj} holds. This is because a value of $j$ for which the condition fails corresponds to a radius around $v$ about which the number of \MIS\ nodes increases rapidly. Since we have an upper bound ($\alpha$) on the size of \MIS\ over the whole graph, we know that this cannot happen at too many radii.

\begin{lemma}
	\label{cl:goodj-cond}
	The number of integers $j$, $0.01\log D \le j \le 0.1\log D$, for which there is some $r \ge 8$ satisfying $s_{j+\log b+r}> 2^{b2^{r-1}}s_{j+\log b}$ is upper bounded by $0.02\log D$.
\end{lemma}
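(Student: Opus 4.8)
The plan is to reformulate the statement in terms of the logarithms $L_k := \log_2 s_k$ and then run a charging argument against a global ``budget''. Since \MIS\ is independent, every $s_k = \sum_{i=0}^{2^{k+1}}\mis_i$ satisfies $1 \le s_k \le \maxis$ (the lower bound because maximality forces an \MIS\ node within distance $1$ of $v$), so the sequence $L_0 \le L_1 \le \cdots$ is nondecreasing and its total increase is at most $\log_2 \maxis$. The crucial quantitative input is the relation $b \ge 4\log_D \maxis$, which rearranges to $\log_2 \maxis \le \tfrac{b}{4}\log D$, so the whole budget of increase available to $L$ is at most $\tfrac{b}{4}\log D$. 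Writing $a := j + \log b$ (an integer, since $b$ is a power of two), the bad event for $j$ is exactly that there exists $r \ge 8$ with $L_{a+r} - L_a > b 2^{r-1}$, and I want to show that at most $0.02\log D$ values of $a$ in the relevant window are bad.

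For each bad $a$ I fix its \emph{minimal} witness $r_a \ge 8$ and split into two cases. When $r_a \ge 9$, minimality applied at $r = r_a - 1$ gives $L_{a + r_a - 1} - L_a \le b 2^{r_a - 2}$, so subtracting this from $L_{a+r_a} - L_a > b 2^{r_a - 1}$ shows that the single final step carries a large increase: $\Delta_{a + r_a - 1} := L_{a+r_a} - L_{a+r_a-1} > b 2^{r_a - 2} \ge 128 b$. I then charge the demand $b 2^{r_a-2}$ to the step index $a + r_a - 1$. Several bad $a$ may share the same right endpoint $c = a + r_a$, but their witnesses are then distinct, and each individually forces $\Delta_{c-1} > b 2^{r_a - 2}$, so summing the demands over one such group is a geometric series bounded by $2\Delta_{c-1}$. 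Summing over the distinct right endpoints (hence distinct step indices) and invoking the global budget gives $\sum_{\text{bad }a,\, r_a \ge 9} b 2^{r_a - 2} \le 2\log_2\maxis \le \tfrac{b}{2}\log D$; since each term is at least $128 b$, there are fewer than $\tfrac{\log D}{256}$ such $a$.

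The remaining case $r_a = 8$ has no earlier witness to exploit, so I instead use disjointness directly: here $L_{a+8} - L_a > 128 b$, and partitioning the candidate indices into residue classes modulo $8$ makes the length-$8$ intervals $[a, a+8]$ within each class pairwise interior-disjoint. Their increases then telescope and sum to at most $\log_2\maxis$ per class, so each class contains fewer than $\tfrac{\log_2\maxis}{128 b} \le \tfrac{\log D}{512}$ such $a$, and all eight classes together contribute fewer than $\tfrac{\log D}{64}$. Adding the two cases yields fewer than $\tfrac{\log D}{256} + \tfrac{\log D}{64} = \tfrac{5}{256}\log D < 0.02\log D$ bad values, as required.

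I expect the main obstacle to be the heavy overlap among the witness intervals $[a, a+r_a]$: naively, a bad $a$ only guarantees a large increase spread across its whole interval, and many such intervals can pile up over the same region, so one cannot simply add their increases. The device that resolves this is passing to the \emph{minimal} witness and charging to the single \emph{final} step of the interval, which concentrates each bad event at one index with a demand that decays geometrically in $r_a$; this keeps the per-index multiplicity summable and lets the telescoping budget $\log_2\maxis \le \tfrac{b}{4}\log D$ control the count. The degenerate case $r_a = 8$, where this concentration is unavailable, is precisely why the separate residue-class argument is needed.
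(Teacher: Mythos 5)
Your proof is correct, and it takes a genuinely different route from the paper's. The paper's proof is a greedy blocking argument: scanning $j$ in increasing order, when it finds a bad $j$ with witness $r$ it declares the whole interval $j,\dots,j+r-1$ bad (an over-count, harmless since only an upper bound on the count is needed) and resumes the scan at $j+r$. Disjointness of the blocks, together with the linearization $2^{b2^{r-1}} \ge 2^{16br}$ (valid for $r \ge 8$), means each declared-bad index carries a factor of at least $2^{16b}$ in a single global telescoping product $\prod_j s_{j+\log b+1}/s_{j+\log b} \le \maxis$, giving a bad count below $\log\maxis/(16b) \le \log D/64$ in a few lines with no case analysis. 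You instead count exactly the set of bad $j$, which is what forces your extra machinery: minimal witnesses, the charge of $b2^{r_a-2}$ to the single last increment $L_{a+r_a}-L_{a+r_a-1}$ when $r_a \ge 9$ (with the geometric-series bound absorbing collisions of right endpoints), and the separate mod-$8$ disjointness argument for the boundary case $r_a = 8$, where no earlier witness exists to subtract. Both proofs run on the same fuel --- monotonicity of $s_k$, the bounds $1 \le s_k \le \maxis$, and $b \ge 4\log_D\maxis$ to convert the budget $\log\maxis$ into $O(b\log D)$ --- and both land comfortably under the target constant ($5/256$ for yours, $1/64$ for the paper's). The paper's willingness to over-count buys brevity: one pass, one product, no cases. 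Your charging scheme buys an exact count and localizes each bad event at a single step, a nice and potentially reusable device, but at the price of the two-case structure.
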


\begin{proof}
	Consider the following process: take integers $j$ with $0.01\log D \le j \le 0.1\log D$ in increasing order. If there is some $r\ge 8$ such that $s_{j+\log b +r}> 2^{b2^{r-1}}s_{j+\log b}$, then call all values from $j$ to $j+r-1$ `bad', and continue the process from $j+r$. In this way, all $j$ that satisfy the condition of the lemma (and possibly some that do not) will be labeled `bad'.
	
	Notice that, for any particular such $j$ and $r$, we have $\prod_{i=j}^{j+r-1}\frac{s_{j+\log b+1}}{s_{j+\log b}} > 2^{b2^{r-1}} \ge 2^{\frac{r}{8}b2^{7}} = 2^{16br}$. Therefore, denoting by $q$ the number of `bad' values of $j$  at the end of the process, we have $\prod_{\text{bad }j}\frac{s_{j+\log b+1}}{s_{j+\log b}} > 2^{16bq} $.
	
	Recalling that $s_{j}$ denotes $\sum\limits_{i=0}^{2^{j+1}}\mis_i$, we have the following facts:
	
	\begin{itemize}
		\item  $s_{j}$ is non-decreasing in $j$, i.e., for any $j$, $\frac{s_{j+1}}{s_{j}} \ge 1$;
		\item $s_0 \ge 1$, since our fixed node $v$ is either in \MIS\ or has at least one neighbor in it;
		\item $s_{\log D} \le \maxis$, since \MIS\ is at most as large as the maximum independent set.
	\end{itemize} 

Using these facts, $\prod_{\text{bad }j}\frac{s_{j+\log b+1}}{s_{j+\log b}} \le \prod_{j=0}^{\log D}\frac{s_{j+\log b+1}}{s_{j+\log b}} \le \maxis$. So,  $2^{16bq}<\maxis$, and therefore $q<\frac{\log\maxis}{16b}\le\frac{\log\maxis}{64\log_D \maxis}< 0.02\log D$. That is, at most $0.02\log D$ of the values $j$ can be `bad', and satisfy the condition.
\end{proof}

We can now prove Theorem \ref{thm:cprop}, the analog of Theorem 2.2 from \cite{CD21} that represents the main technical improvement.

\begin{proof}[Proof of Theorem \ref{thm:cprop}]
	With probability at least $1-\frac{0.02}{0.1-0.01} \ge 0.77$ over choice of $j$, for all $i\ge 8$ we have that $s_{j+\log b+r}> 2^{b2^{r-1}}s_{j+\log b}$, by Lemma \ref{cl:goodj-cond}. Then, $S_{2^{-j}} = O(b2^j)$ by Lemma \ref{lem:cl:goodj}, and applying Lemma \ref{lem:edist}, we find that the expected distance from $v$ to its cluster center is at most $O(b2^j)=O(\frac{\log_D \maxis}{\beta})$.
\end{proof}

This theorem is strictly stronger than Theorem 2.2 of \cite{CD21}, which gave an expected distance to the cluster center of $O(\frac{\log_D n}{\beta})$. This improved dependence on $\alpha$ rather than $n$ carries through the remainder of the analysis of \cite{CD21} without any other changes in the proofs, yielding the following result:

\begin{theorem}
	\label{thm:compete}
	\textsc{Compete}$(S)$ informs all nodes of the highest message in $S$ within $O(D\log_D \maxis+|S|D^{0.125} +\log^{O(1)}n)$ time-steps, with high probability.
\end{theorem}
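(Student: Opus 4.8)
The plan is to treat the statement as a reduction to the analysis of \cite{CD21}, localizing the single place where their time bound depended on $n$ and replacing it with our improved clustering bound (Theorem~\ref{thm:cprop}). First I would dispatch correctness independently of timing. The procedure \textsc{Compete}$(S)$ is unchanged from \cite{CD21} except that every invocation of $\textsc{Partition}(\beta)$ is replaced by $\textsc{Partition}(\beta,\MIS)$, and an initial \textsc{ComputeMIS} step is prepended. Since \MIS\ is maximal, every node is within distance $1$ of an \MIS\ node, so $\textsc{Partition}(\beta,\MIS)$ still assigns every $u\in V$ to a well-defined cluster (the \MIS\ node minimizing $dist(u,v)-\delta_v$). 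The schedule computations, the intra-cluster propagation, and the background boundary-crossing process then operate on these clusterings exactly as in \cite{CD21}; the identity of the cluster centers affects only the \emph{speed} of propagation, not its correctness, so the argument that every node eventually learns the lexicographically-highest message in $S$ goes through verbatim.

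For the running time I would decompose into the three summands. The dominant term $O(D\log_D \maxis)$ comes from the intra-cluster propagation phase (step 8), and this is the only place our improvement enters. In \cite{CD21}, the rate at which a message advances toward its destination is controlled by the expected distance from a node to its cluster center in a random fine clustering, and their total propagation-time bound is proportional to $D$ times this expected cluster radius. Their Theorem 2.2 gave the radius bound $O(\frac{\log_D n}{\beta})$, yielding $O(D\log_D n)$; substituting our Theorem~\ref{thm:cprop} replaces it with $O(\frac{\log_D \maxis}{\beta})$, which appears linearly in their bound and hence gives the leading term $O(D\log_D \maxis)$. The shortened \textsc{Intra-Cluster Propagation} length $O(\frac{\log_D \maxis}{\beta})$ in step 8 is exactly what this improved radius licenses, and the concentration argument of \cite{CD21} over the $D^{0.99}$-length random sequence of fine clusterings carries through to convert the per-clustering expected-distance bound into high-probability total-time success.

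The additive polylogarithmic term absorbs the overheads: the new \textsc{ComputeMIS} step costs $O(\log^3 n)$ rounds, the schedule computations cost $\log^{O(1)}n$ per clustering, and per-round radio overheads contribute further polylogarithmic factors, all as in \cite{CD21}. The coarse-clustering setup costs $O(\frac{\log^{O(1)}n}{\beta})=O(D^{0.5}\log^{O(1)}n)$, which by AM-GM is $O(D+\log^{O(1)}n)$; and since any connected graph satisfies $\maxis\ge D/2$ (alternate vertices of a shortest $D$-hop path are pairwise at distance $\ge 2$, hence independent), we have $\log_D \maxis=\Omega(1)$ for $D$ larger than a constant, so $O(D)=O(D\log_D \maxis)$ and the setup cost is absorbed into the leading term. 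The $|S|D^{0.125}$ term is inherited unchanged from \cite{CD21}, arising from resolving contention among the $|S|$ competing candidate messages (e.g.\ in the boundary-crossing background process), and is independent of the clustering-radius improvement.

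The main obstacle is not a fresh calculation but verifying that the substitution is genuinely plug-and-play: I would need to confirm that the only dependence on $n$ (rather than $\maxis$) in the \emph{leading} term of \cite{CD21}'s timing analysis flows through their Theorem 2.2, and that nothing in the boundary-crossing process, the schedule lengths, or the concentration bounds silently reintroduces a $\log_D n$ factor in the $O(D\cdot)$ term. The delicate inherited step is precisely that concentration argument, where the $D^{0.99}$ sequence length and the constant success probability ($0.77$ here, versus $0.55$ in \cite{CD21}) of Theorem~\ref{thm:cprop} are consumed; I would check that the improved constant only helps, so that high-probability success over the whole propagation is preserved with the new radius bound.
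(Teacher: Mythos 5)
Your proposal is correct and takes essentially the same route as the paper: the paper proves Theorem \ref{thm:compete} exactly by observing that Theorem \ref{thm:cprop} strictly strengthens Theorem 2.2 of \cite{CD21} (both in replacing $\log_D n$ by $\log_D \maxis$ and in the success probability $0.77$ versus $0.55$), so that the remainder of the analysis of \cite{CD21} carries through with no other changes, with the $O(\log^3 n)$ cost of \textsc{ComputeMIS} absorbed into the $\log^{O(1)}n$ term. Your extra verifications (maximality of \MIS\ making the clusters well-defined, $\maxis \ge D/2$ so that $O(D)$ setup costs are absorbed into the leading term, and the $|S|D^{0.125}$ term being unaffected by the clustering change) are precisely the plug-and-play checks the paper leaves implicit.
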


Notice that the $O(\log^3 n)$ time-steps we added by running \textsc{ComputeMIS} are absorbed in the $\log^{O(1)}n$ term.

As in \cite{CD21}, \textsc{Compete} can then be used to perform broadcasting and leader election:

\begin{theorem}
	\label{thm:broadcasting}
	\textsc{Compete}$(\{s\})$, where $s$ is the source node, completes broadcasting in undirected graphs in $O(D\log_D \maxis+\log^{O(1)} n)$ time with high probability.
\end{theorem}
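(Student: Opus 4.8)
The plan is to derive this directly as the special case $S=\{s\}$ of Theorem~\ref{thm:compete}. Broadcasting asks that the single message held by the source $s$ reach every node of the connected graph, so I would instantiate \textsc{Compete}$(S)$ with the singleton candidate set $S=\{s\}$. Since there is then only one candidate, the ``highest message in $S$'' whose delivery is guaranteed by Theorem~\ref{thm:compete} is precisely $s$'s message, and the guarantee that all nodes learn it is exactly the broadcasting requirement. This settles correctness immediately, inheriting the high-probability success from Theorem~\ref{thm:compete}.

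The remaining work is purely to simplify the running-time expression. Substituting $|S|=1$ into Theorem~\ref{thm:compete} yields $O(D\log_D \maxis + D^{0.125} + \log^{O(1)} n)$, so the only thing to verify is that the middle term $D^{0.125}$ is absorbed into the other two. The structural fact I would invoke is that, because the graph is connected, a shortest path realizing the diameter has $D+1$ vertices, and taking every other vertex along it gives an independent set of size at least $D/2$: two chosen vertices are at graph-distance exactly $2$ times their separation along the path (by shortest-pathness), hence non-adjacent. This gives $\maxis \ge D/2$.

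With $\maxis \ge D/2$ in hand, for $D$ larger than a fixed constant (say $D\ge 4$) we have $\log_D \maxis \ge \log_D(D/2) = 1 - \tfrac{1}{\log_2 D} \ge \tfrac12$, so $D\log_D \maxis \ge D/2 \ge D^{0.125}$, absorbing the middle term into the first. For the finitely many small values $D<4$, the quantity $D^{0.125}$ is itself bounded by a constant and is absorbed into $\log^{O(1)} n$. In either regime the bound collapses to $O(D\log_D \maxis + \log^{O(1)} n)$, which is the claimed running time.

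The step I expect to require the most care is exactly this absorption of $D^{0.125}$; everything else is an immediate instantiation of Theorem~\ref{thm:compete}. The subtlety is that one cannot simply assume $\log_D \maxis \ge 1$ (which would need $\maxis \ge D$), so I would want the $\maxis \ge D/2$ observation — or equivalently any argument ruling out $\log_D \maxis$ dropping below $D^{-0.875}$ — stated cleanly, since it is precisely what prevents the $|S|\,D^{0.125}$ contribution from dominating when $|S|=1$.
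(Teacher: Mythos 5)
Your proposal is correct and takes essentially the same route as the paper: Theorem~\ref{thm:broadcasting} is obtained by instantiating Theorem~\ref{thm:compete} with the singleton set $S=\{s\}$, so that the ``highest message'' is the source message and its delivery to all nodes is exactly broadcasting. The one difference is that the paper's two-line proof silently drops the $|S|D^{0.125}$ term, whereas you justify its absorption explicitly via $\maxis \ge D/2$ for connected graphs (alternate vertices on a diameter-realizing shortest path form an independent set), giving $D\log_D \maxis \ge D/2 \ge D^{0.125}$ for $D$ above a constant and constant absorption otherwise --- a valid argument that fills in a step the paper leaves implicit.
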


\begin{proof}
	\textsc{Compete} informs all nodes of the highest message in the message set in time $O(D\log_D \maxis+\log^{O(1)} n)$, with high probability. Since this set contains only the source message, broadcasting is completed.
\end{proof}

\begin{algorithm}[H]
	\caption{{\sc Leader Election}}
	\label{alg:LE}
	\begin{enumerate}[1)]
		\item Nodes choose to become candidates in $C$ with probability $\Theta(\frac{\log n}{n})$.
		\item Candidates uniformly randomly generate $\Theta(\log n)$-bit IDs.
		\item Perform \textsc{Compete}$(C)$.
	\end{enumerate}
\end{algorithm}	

\begin{theorem}
	\label{thm:leader-election}
	Algorithm \ref{alg:LE} completes leader election in undirected graphs within time $O(D\log_D \maxis+\log^{O(1)} n)$, with high probability.
\end{theorem}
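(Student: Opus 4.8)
The plan is to treat \textsc{Compete} as a black box via Theorem~\ref{thm:compete} and reduce leader election to a single invocation of it, so that the only genuinely new work is controlling the random candidate set $C$ produced in Steps~1--2 of Algorithm~\ref{alg:LE} and then verifying the running-time bound. Concretely, I would show that with high probability $C$ is nonempty, that the highest ID among candidates is held by a \emph{unique} node, and that \textsc{Compete}$(C)$ therefore causes every node to agree on exactly that node as leader.

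First I would establish the structural facts about $C$. Since each of the $n$ nodes joins $C$ independently with probability $\Theta(\tfrac{\log n}{n})$, the expected size of $C$ is $\Theta(\log n)$, and a Chernoff bound gives $1 \le |C| = \Theta(\log n)$ with high probability; in particular $C \neq \emptyset$, so a leader exists. Next I would argue uniqueness of the maximum ID: with $\Theta(\log n)$-bit IDs drawn uniformly, any fixed pair of candidates collides with probability $n^{-\Theta(1)}$, and since there are only $O(\log^2 n)$ pairs, a union bound shows all candidate IDs are distinct with high probability. Hence there is a unique candidate holding the lexicographically highest message. Correctness then follows immediately from Theorem~\ref{thm:compete}: \textsc{Compete}$(C)$ informs every node of the highest message in $C$, which is this unique maximum ID, so all nodes agree on the single node holding it and designate it leader.

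For the running time I would substitute $|S| = |C| = \Theta(\log n)$ into the bound $O(D\log_D\maxis + |S|D^{0.125} + \log^{O(1)}n)$ of Theorem~\ref{thm:compete} and show the middle term is absorbed, via a split on the size of $D$. If $D \le \log^{16} n$, then $D^{0.125} \le \log^2 n$, so $|C|D^{0.125} = O(\log^3 n) = \log^{O(1)}n$. If $D > \log^{16} n$, then $\log n < D^{1/16}$, so $|C|D^{0.125} = O(D^{1/16}\cdot D^{1/8}) = O(D^{3/16}) = o(D)$. Since leader election assumes a connected graph, a shortest path realizing the diameter yields an independent set of size at least $D/2$, whence $\maxis \ge D/2$ and $D\log_D\maxis = \Omega(D)$ for $D \ge 2$; the $o(D)$ overhead is thus dominated, and the total is $O(D\log_D\maxis + \log^{O(1)}n)$.

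The bulk of the difficulty is already discharged inside Theorem~\ref{thm:compete} (and ultimately in the improved clustering bound of Theorem~\ref{thm:cprop}), so this proof has no serious obstacle; the two points requiring genuine care are (i) the uniqueness of the maximum ID, which is what guarantees agreement on a single leader \emph{node} rather than merely agreement on an ID value, and (ii) the observation that $\maxis = \Omega(D)$ for connected graphs, which is what makes the stated leading term legitimately dominate both the $\Omega(D)$ intrinsic cost and the $|C|D^{0.125}$ contribution from running \textsc{Compete} on $\Theta(\log n)$ candidates.
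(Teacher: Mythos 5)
Your proof is correct and follows essentially the same route as the paper's: show that with high probability $|C| = \Theta(\log n)$ and all candidate IDs are unique, then invoke Theorem~\ref{thm:compete} on the candidate set. The only difference is that you explicitly verify the absorption of the $|C|D^{0.125}$ term (via the case split on $D$ and the observation that $\maxis = \Omega(D)$ in connected graphs), a detail the paper's proof leaves implicit.
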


\begin{proof}
	With high probability $|C| = \Theta(\log n)$ and all candidate IDs are unique. Conditioning on this, \textsc{Compete} informs all nodes of the highest candidate ID within time $O(D\log_D \maxis+\log^{O(1)} n)$, with high probability. Therefore leader election is completed.
\end{proof}

Since growth-bounded graphs have $\maxis = poly(D)$, we obtain improved running times for both tasks therein:

\begin{corollary}
	\label{cor:leader-election}
Broadcasting and leader election can be performed in undirected growth-bounded graphs in $O(D+\log^{O(1)} n)$ rounds, succeeding with high probability.
\end{corollary}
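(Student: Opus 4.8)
The plan is to obtain Corollary \ref{cor:leader-election} as an immediate consequence of Theorems \ref{thm:broadcasting} and \ref{thm:leader-election}, by specializing their common running-time bound $O(D\log_D \maxis+\log^{O(1)} n)$ to the growth-bounded setting. The only thing that needs to be verified is that the factor $\log_D \maxis$ collapses to a constant once we invoke the defining property of this graph class, so that the leading term degrades from $O(D\log_D \maxis)$ to $O(D)$.

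First I would recall from Section \ref{sec:classes} that in a (polynomially) growth-bounded graph any independent set in the whole graph has size $poly(D)$, so $\maxis = D^{O(1)}$; concretely, write $\maxis \le D^{c}$ for some fixed constant $c$. A change-of-base computation then gives $\log_D \maxis = \frac{\log \maxis}{\log D} \le c = O(1)$, and hence the leading term satisfies $D\log_D \maxis = O(D)$. Substituting this into the bounds of Theorems \ref{thm:broadcasting} and \ref{thm:leader-election} turns $O(D\log_D \maxis+\log^{O(1)} n)$ into $O(D+\log^{O(1)} n)$ for both broadcasting and leader election, with the high-probability guarantee inherited unchanged from those two theorems.

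The only point requiring a little care is the behavior for very small diameter, where the base-$D$ logarithm is delicate (and for $D=1$ formally indeterminate). If $D$ is a constant, however, then $\maxis$ is itself bounded by a constant, being at most $poly(D)$, so the entire term $D\log_D \maxis$ is $O(1)$ and is absorbed into the additive $\log^{O(1)} n$ term with nothing further to prove; for $D \ge 2$ the change-of-base estimate above is valid as stated. I do not expect any genuine obstacle here: the corollary is a direct parametric substitution, and all of the substantive work has already been carried out in establishing Theorem \ref{thm:cprop} (the improved dependence on $\maxis$ rather than $n$) and in propagating that improvement through the analysis of \cite{CD21} to obtain Theorems \ref{thm:broadcasting} and \ref{thm:leader-election}.
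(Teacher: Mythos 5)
Your proposal is correct and matches the paper's own (one-line) justification: the corollary follows from Theorems \ref{thm:broadcasting} and \ref{thm:leader-election} because growth-bounded graphs satisfy $\maxis = poly(D)$, hence $\log_D \maxis = O(1)$. Your additional handling of the constant-$D$ edge case is a reasonable extra precaution but does not change the argument.
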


This result naturally implies the same running time on the subclasses discussed in Section \ref{sec:classes}: (quasi) unit disk graphs, (quasi) unit ball graphs, and undirected geometric radio networks.

\section{Finding an MIS}\label{sec:MIS}
In this section we describe the \textsc{ComputeMIS} algorithm to compute a maximal independent set. Such an algorithm already exists for unit disk graphs, due to \cite{MW04} and \cite{MW05}, and indeed the algorithm of \cite{MW05} is very fast ($O(\log^2 n)$ rounds) and also has the additional property of working under asynchronous wake-up. However, the problem is more difficult in general graphs, and we are not aware of any prior algorithms for the general graph radio network model. In this section we provide the first such algorithm, running in $O(\log^3 n)$ rounds, i.e., within a $\log n$-factor of the $\Omega(\log^2 n)$ lower bound of \cite{FFM06}. When used in our broadcasting and leader election algorithms, this is dominated by the $\log^{O(1)} n$ term in those bounds.

\subsection{MIS Algorithm}

To achieve a running time of $O(\log^3 n)$ for MIS, we employ a radio network implementation of Ghaffari's algorithm for the LOCAL model \cite{Ghaffari16}. Ghaffari's algorithm was devised to improve the distributed complexity of MIS from the $O(\log n)$ of Luby's classic algorithm \cite{Luby86} to $O(\log \Delta + \log^{O(1)}\log n)$ (once equipped with the subsequent improved network decomposition of Rozho{\v{n}} and Ghaffari \cite{RG20}). Here, we use it only as an $O(\log n)$-round algorithm, simulating each step with $O(\log^2 n)$ time-steps of the radio network model. The reason for choosing Ghaffari's algorithm over Luby's is that it proves more amenable to adaptation for radio networks; while Luby's algorithm is simpler, its rounds require communication that is difficult to see how to implement in $O(\log^2 n)$ radio network time-steps\footnote{Specifically, in the standard version of Luby's algorithm, each node generates a random variable in $[0,1]$ and sends this to all its neighbors, which cannot be done efficiently in radio networks. A variant in which nodes $v$ self-nominate with probability $\Theta(1/degree(v))$ and then join the MIS if no higher-degree neighbor also self-nominated is better-suited, but checking whether higher-degree neighbors self-nominate is still problematic if the overall goal is an $O(\log^3 n)$ running time.}.

Ghaffari's algorithm is as follows (Algorithm \ref{alg:GMIS}, presentation slightly modified to make our subsequent adaptation to radio networks clearer):

\begin{algorithm}\caption{Ghaffari's MIS Algorithm \cite{Ghaffari16}}\label{alg:GMIS}
Each node $v$ maintains a desire-level $p_t(v)$ for each round $t$; $p_0(v)\gets \frac 12$;\\
	\For{ $t = 0 \rightarrow O(\log n)$}{
		$v$ \emph{marks} itself with probability $p_t(v)$;\\
		\If{$v$ has marked itself and none of its neighbors have}{
			$v$ joins \MIS	;\\
			$v$ and its neighbors remove themselves from the graph;\\
		}
		Effective degree $d_{t}(v) \gets \sum_{u\in N(v)}p_{t}(u)$;\\
		Desire-level $p_{t+1}(v)\gets\begin{cases} p_{t}/2 &\text{ if } d_{t}(v) \ge 2\\ \min\{2p_{t}(v),\frac 12\}&\text{ if } d_{t}(v) < 2\end{cases}$\\

	}
	
\end{algorithm}

There are three points at which communication is needed, which we must simulate in the radio network model: checking if any neighbors have marked themselves,  informing neighbors of joining \MIS, and determining effective degree. The first two of these we can easily accomplish using the classic \textsc{Decay} protocol, first introduced by Bar-Yehuda, Goldreich, and Itai \cite{BGI92}, for ensuring single-hop transmission success:

\begin{algorithm}[h]
	\caption{\textsc{Decay} (at a node $v$)}
	\label{alg:decay}
	\For {$i = 1$ {\textrm\textbf{to}} $\log n$}{
		$v$ transmits its message with probability $2^{-i}$.
	}
\end{algorithm}

It is a well-known property of \textsc{Decay} (see \cite{BGI92}) that if performed by a set $S$ of nodes, any node with a neighbor in $S$ hears a transmission with $\Omega(1)$ probability. By iterating $O(\log n)$ times (with sufficiently large constant within the $O()$ notation), this probability can be amplified to $1-n^{-c}$ for any constant $c$.

\begin{claim}\label{clm:decay}
	$O(\log n)$ iterations of \textsc{Decay}, performed by nodes in some set $S$, informs all nodes with a neighbor in $S$ of a message, with high probability.\qed
\end{claim}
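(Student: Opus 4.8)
The plan is to prove the claim in three layers: first analyze a single execution of \textsc{Decay} from the viewpoint of one fixed receiver, then amplify by independent repetition, and finally take a union bound over all receivers.

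First I would fix a node $u$ with at least one neighbor in $S$, and let $k := |N(u)\cap S| \ge 1$ be the number of its neighbors transmitting via \textsc{Decay}. The crucial observation is that, since $1 \le k \le n$, the geometric sweep of transmission probabilities $2^{-1}, 2^{-2}, \dots, 2^{-\log n}$ must pass through the regime $2^{-i} = \Theta(1/k)$: concretely, the index $i^* \approx \log_2 k$ satisfies $2^{-i^*} \in [\tfrac{1}{4k}, \tfrac{1}{2k}]$, and $1 \le k \le n$ keeps $i^*$ within the range of the loop up to an additive constant. In time-step $i^*$, each of the $k$ transmitting neighbors sends independently with probability $2^{-i^*}$, so the probability that \emph{exactly one} of them transmits — and hence that $u$ hears the message — is $k\cdot 2^{-i^*}(1-2^{-i^*})^{k-1}$. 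The factor $k\cdot 2^{-i^*}$ is $\Theta(1)$ by the choice of $i^*$, and $(1-2^{-i^*})^{k-1} \ge (1-\tfrac{1}{2k})^{k-1}$ is bounded below by a positive constant (tending to $e^{-1/2}$). Multiplying, $u$ hears a transmission during this single good time-step, and therefore during the whole \textsc{Decay} run, with some constant probability $p = \Omega(1)$, uniformly over the unknown value of $k$.

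Next I would amplify. Running $\Theta(\log n)$ independent executions of \textsc{Decay}, the events that $u$ hears nothing in execution $\ell$ are independent across $\ell$, so the probability that $u$ hears nothing in any of them is at most $(1-p)^{\Theta(\log n)}$. Choosing the constant hidden in the $\Theta(\log n)$ large enough (depending on $p$ and the target exponent $c$) makes this at most $n^{-(c+1)}$. Finally, a union bound over the at most $n$ nodes that have a neighbor in $S$ gives a total failure probability of at most $n\cdot n^{-(c+1)} = n^{-c}$, which is the claimed high-probability guarantee.

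The only genuinely delicate point is the single-execution analysis, and even there the content is that \textsc{Decay}'s geometric sweep is guaranteed to hit the contention level $2^{-i}\approx 1/k$ faced by the receiver; since the protocol has no knowledge of $k$, it is this sweep — rather than any tuning at an individual time-step — that does the work. I expect no serious obstacle beyond fixing the constants so that the good index $i^*$ stays within the loop range (which follows from $k\le n$) and so that the amplification overcomes the final union bound.
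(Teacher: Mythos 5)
Your proposal is correct and takes essentially the same route as the paper, which does not spell out a proof at all: it invokes the constant per-run success probability of \textsc{Decay} as a known property from \cite{BGI92} and then notes the $O(\log n)$-fold amplification, whereas you supply the single-run contention argument (the geometric sweep hitting $2^{-i^*}=\Theta(1/k)$) explicitly, followed by the same amplification and union bound. One tiny caveat: since the claim also covers receivers $u\in S$ (as needed in Algorithm~\ref{alg:RMIS}, where marked nodes must learn about marked neighbors), a half-duplex receiver must additionally be silent in the good slot to hear, which multiplies your success probability by $1-2^{-i^*}\ge \tfrac12$ and changes nothing else.
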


It remains to find a way of determining effective degree in the radio network model. We first note that Ghaffari's algorithm does not require the exact $d_t(v)$ values, only whether they are less than or at least $2$ (and this threshold could be any fixed constant). It in fact suffices to give a procedure that determines (with high probability) whether $d_t(v)$ is above a constant threshold, or whether it is below a (different) constant threshold. We give an \textsc{EstimateEffectiveDegree} procedure to do this (Algorithm \ref{alg:EED}).

	\begin{algorithm}\caption{\textsc{EstimateEffectiveDegree} (at a node $v$)}\label{alg:EED}
	\For{ $i = 0 \rightarrow \log n$}{
		\For{ $C\log n$ time-steps (where $C$ is a sufficiently large constant)}{
			Node $v$ transmits with probability $\frac{p_{t}(v)}{2^i}$;\\
		}	
	}	
	If there is an $i$ for which $v$ heard at least $C\log n/33$ transmissions, return $\textsc{High}$;\\
	Otherwise, return $\textsc{Low}$;	
\end{algorithm}

The property we require is the following:

\begin{lemma}\label{lem:EED}
	If $d_t(v) \ge 1$, then \textsc{EstimateEffectiveDegree} returns $\textsc{High}$ for $v$ with high probability. If  $d_t(v) \le 0.01$, then \textsc{EstimateEffectiveDegree} returns $\textsc{Low}$ for $v$ with high probability. (If $d_t(v) \in (0.01,1)$ then either output is permitted.)
\end{lemma}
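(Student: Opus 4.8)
The plan is to analyze, for each phase $i$ of \textsc{EstimateEffectiveDegree}, the probability $\pi_i$ that $v$ hears a single successful transmission in one time-step of that phase, and then apply concentration to the number of such time-steps. In any one time-step of phase $i$, node $v$ listens with probability $1 - p_t(v)/2^i$ (a valid quantity since $p_t(v)\le\tfrac12$ and $2^i\ge 1$), and independently each neighbour $u$ transmits with probability $p_t(u)/2^i$; writing $X_i$ for the number of transmitting neighbours, we have $\mathbb{E}[X_i] = d_t(v)/2^i$, and since $v$'s coin is independent of its neighbours',
\[
\pi_i = \left(1 - \tfrac{p_t(v)}{2^i}\right)\Prob{X_i = 1}.
\]
The count $Y_i$ of time-steps in phase $i$ in which $v$ hears a transmission is then $\mathrm{Bin}(C\log n,\pi_i)$, and the algorithm returns $\textsc{High}$ exactly when some $Y_i \ge C\log n/33$. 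The whole argument reduces to placing the threshold $1/33$ strictly between the achievable values of $\pi_i$ in the two regimes.

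For the case $d_t(v)\ge 1$, I would exhibit a single good phase. Since $1 \le d_t(v) \le (n-1)/2$, there is an index $i^\ast \in [0,\log n]$ with $2^{i^\ast}\in[d_t(v), 2d_t(v))$, so that $\lambda := \mathbb{E}[X_{i^\ast}] = d_t(v)/2^{i^\ast} \in (\tfrac12, 1]$. The key estimate is a lower bound on the probability of \emph{exactly one} neighbour transmitting: from $\Prob{X_{i^\ast}=1} \ge \lambda\prod_u (1 - p_t(u)/2^{i^\ast})$ together with the elementary inequality $1 - q \ge 4^{-q}$ for $q\in[0,\tfrac12]$ (each $p_t(u)/2^{i^\ast}\le\tfrac12$), I get $\Prob{X_{i^\ast}=1}\ge \lambda\, 4^{-\lambda}\ge \tfrac14$ over $(\tfrac12,1]$. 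Combined with the listening factor $1 - p_t(v)/2^{i^\ast}\ge \tfrac12$, this gives $\pi_{i^\ast}\ge \tfrac18$, comfortably above $1/33$; a Chernoff lower-tail bound then makes $Y_{i^\ast}\ge C\log n/33$ with probability $1 - n^{-\Omega(C)}$, so $\textsc{High}$ is returned with high probability.

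For the case $d_t(v)\le 0.01$, I would instead show that \emph{every} phase stays below threshold. For each $i$, the trivial bound $\Prob{X_i=1}\le \mathbb{E}[X_i]= d_t(v)/2^i \le 0.01$ gives $\pi_i\le 0.01$, whence $\mathbb{E}[Y_i]\le 0.01\,C\log n$. Since $0.01\,e < 1/33$, a Poisson-type Chernoff upper-tail bound (of the form $\Prob{Y_i\ge a}\le (e\,\mathbb{E}[Y_i]/a)^a$ with $a = C\log n/33$) yields $\Prob{Y_i\ge C\log n/33}\le n^{-\Omega(C)}$; a union bound over the $O(\log n)$ phases shows no phase crosses the threshold, so $\textsc{Low}$ is returned with high probability. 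Taking $C$ a sufficiently large constant makes all failure probabilities $n^{-c}$ for the desired $c$.

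The main obstacle is the good-phase estimate: bounding from below, by a constant, the probability that \emph{exactly} one neighbour transmits (rather than at least one), which is what forces the analysis through the exactly-one-success quantity $\Prob{X_i=1}$ and the convexity inequality $1 - q\ge 4^{-q}$. Everything else is routine Chernoff bookkeeping, but the constants must be tracked carefully: the threshold $1/33$ has to lie strictly between $0.01\,e\approx 0.027$ (so the low regime cannot reach it) and $\tfrac18$ (so the good phase of the high regime does), and it is precisely this two-sided margin — rather than any single inequality — that pins down the admissible constants and explains the permissive gap $(0.01,1)$ in which either answer is allowed.
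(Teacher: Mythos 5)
Your proof is correct and follows essentially the same route as the paper's: both identify a ``good'' dyadic phase $i^\ast$ where the scaled effective degree is $\Theta(1)$, lower-bound the per-time-step probability of hearing exactly one transmission by a constant via the inequality $1-x\ge 4^{-x}$, upper-bound the per-step probability by a constant below $1/33$ in the low regime, and finish with Chernoff bounds and a union over phases. The only differences are cosmetic bookkeeping (you round $i^\ast$ up, getting $\lambda\in(\tfrac12,1]$ and constant $\tfrac18$ where the paper rounds down and gets $\tfrac1{32}$, and you use $\Prob{X_i=1}\le\mathbb{E}[X_i]$ where the paper derives an explicit upper bound on the hearing probability), so there is nothing to add.
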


\begin{proof}

	Define $q_i(v)$ to be the probability that $v$ hears a transmission in any particular one of the time-steps of round $i$. We have:
	
	\begin{align*}
		q_i(v)
		&=\Prob{\bigcup_{u \in N(v)}u\text{ transmits and no other node in $N(v)\cup \{v\}$ does}}\\
		&= \sum_{u \in N(v)}\Prob{u\text{ transmits and no other node in $N(v)\cup \{v\}$ does}}\\
		&=  \sum_{u \in N(v)}\frac{p_{t}(u)}{2^i}  \prod_{w\in N(v)\cup \{v\}\setminus \{u\}}\left( 1-\frac{p_{t}(w)}{2^i}\right)\\
		&=  \sum_{u \in N(v)}\frac{p_{t}(u)}{2^i} \left( 1-\frac{p_{t}(v)}{2^i}\right)\left( 1-\frac{p_{t}(u)}{2^i}\right)^{-1} \prod_{w\in N(v)}\left( 1-\frac{p_{t}(w)}{2^i}\right)\enspace.
	\end{align*}

	We first show a lower bound for $q_i(v)$:
	
	\begin{align*}
		q_i(v)
		&=  \sum_{u \in N(v)}\frac{p_{t}(u)}{2^i} \left( 1-\frac{p_{t}(v)}{2^i}\right)\left( 1-\frac{p_{t}(u)}{2^i}\right)^{-1} \prod_{w\in N(v)}\left( 1-\frac{p_{t}(w)}{2^i}\right)\\
		&\ge 	\sum_{u \in N(v)}\frac{p_{t}(u)}{2^i} \left( 1-\frac{1}{2^{i+1}}\right) \prod_{w\in N(v)}\left( 4^{-\frac{p_{t}(w)}{2^i}} \right) \text{\hspace{1cm}using $1-x\ge 4^{-x}$ for $x\in [0,\frac 12]$}\\
		&\ge \frac 12	\sum_{u \in N(v)}\frac{p_{t}(u)}{2^i} 4^{-\sum_{w\in N(v)}\frac{p_{t}(w)}{2^i}}&\\
		&= 2^{-1-i} d_t(v)	4^{-2^{-i}d_t(v)}\enspace.\\
	\end{align*}
	
	To prove the first point of the lemma, when $d_t(v) \ge 1$, consider $j=\lfloor \log d_t(v)\rfloor$ (i.e., in $[0,\log n]$). Then, 
	
	\begin{align*}
		q_j(v)
		&\ge
		2^{-1-j} d_t(v)	4^{-2^{-j}d_t(v)}
		\ge 
		\frac{1}{2 d_t(v)} d_t(v)	4^{-2} = \frac{1}{32}\enspace.
	\end{align*}
	
	By linearity of expectation, the expected number of transmissions $v$ hears in round $j$ is at least $C\log n/32$. Since each time-step in the round is independent, by applying a Chernoff bound and setting $C$ sufficiently high, $v$ hears at least $C\log n/33$ transmissions in round $j$ with probability at least $1-n^{-3}$.
	
	Next, we upper-bound $q_i(v)$, in order to prove the second point of the lemma:
	
	\begin{align*}
		q_i(v)
		&=  \sum_{u \in N(v)}\frac{p_{t}(u)}{2^i} \left( 1-\frac{p_{t}(v)}{2^i}\right)\left( 1-\frac{p_{t}(u)}{2^i}\right)^{-1} \prod_{w\in N(v)}\left( 1-\frac{p_{t}(w)}{2^i}\right)\\
		&\le 	\sum_{u \in N(v)}\frac{p_{t}(u)}{2^i}\left( 1-\frac{1}{2^{i+1}}\right)^{-1} 
		\le 2	\sum_{u \in N(v)}\frac{p_{t}(u)}{2^i} 
		= 2^{1-i} d_t(v)\enspace.\\
	\end{align*}
	
	When $d_t(v) \le 0.01$, we consider any integer $i \in [0,\log n]$. Then, $q_i(v)\le 2^{1-i} 0.01 \le 0.02$.

	Therefore, the expected number of transmissions $v$ hears in any round is at most $0.02C\log n$, and again choosing $C$ to be sufficiently large, by a Chernoff bound it hears fewer than $C\log n/33$ in all rounds with probability at least $1-n^{-3}$.
	
\end{proof}

Using this \textsc{EstimateEffectiveDegree} procedure, we can now give a radio network adaptation of Ghaffari's MIS algorithm:

\begin{algorithm}\caption{Radio MIS}\label{alg:RMIS}
	Each node $v$ maintains a desire-level $p_t(v)$ for each round $t$; $p_0(v)\gets \frac 12$;\\
	\For{ $t = 0 \rightarrow O(\log n)$}{
		$v$ \emph{marks} itself with probability $p_t(v)$;\\
		Marked nodes perform $O(\log n)$ iterations of \textsc{Decay} ($O(\log^2 n)$ time-steps);\\
		\If{$v$ has marked itself and none of its neighbors have}{
			$v$ joins \MIS	;\\
		}
		\MIS\ nodes perform $O(\log n)$ iterations of \textsc{Decay};\\
		\MIS\ nodes and their neighbors remove themselves from the graph;\\
		Effective degree $d^*_{t}(v) \gets \textsc{EstimateEffectiveDegree}$;\\
		Desire-level $p_{t+1}(v)\gets\begin{cases} p_{t}(v)/2 &\text{ if } d^*_{t}(v) = \textsc{High} \\ \min\{2p_{t}(v),\frac 12\}&\text{ if } d^*_{t}(v)= \textsc{Low}\end{cases}$\\
	}
	
\end{algorithm}

\subsection{Analysis of Radio MIS}
We first note that with high probability, all of the following events occur:
\begin{itemize}
	\item Each instance of marked nodes performing \textsc{Decay} informs all nodes whether they have (at least one) marked neighbor.
	\item Each instance of \MIS\ nodes performing \textsc{Decay} informs all nodes whether they have (at least one) neighbor in \MIS.
	\item Every call to \textsc{EstimateEffectiveDegree} satisfies the properties of Lemma \ref{lem:EED}.
\end{itemize}

Consequently, we can condition on the above and analyze the algorithm assuming they hold. The analysis then follows the line of Ghaffari, with changes resulting from our differing effective degree thresholds and the fact that we allow $O(\log n)$ rounds (rather than $O(\log \Delta)$ as in \cite{Ghaffari16}).

We will call a node $v$ low-degree if $d_t(v) < 1$, and high-degree otherwise. As in \cite{Ghaffari16}, we define two types of \emph{golden round} for a node $v$, which we will later show give it a constant probability of being removed from the graph: 

\begin{itemize}
	\item Type 1: rounds in which $d_t(v) < 1$ and $p_t(v) = \frac 12$;
	\item Type 2: rounds in which $d_t(v) \ge \frac{1}{200}$ and at least $d_t(v)/10$ of it is contributed by low-degree neighbors.
\end{itemize} 

We can show that either $v$ leaves the graph, or it has $\Omega(\log n)$ golden rounds:

\begin{lemma}\label{lem:manygolden}
For any sufficiently large constant $c$, by the end of round $13c \log n$, one of the following must have occured:

\begin{itemize}
\item $v$ has joined \MIS;
\item a neighbor of $v$ has joined \MIS;
\item $v$ experiences at least $c\log n$ type-1 golden rounds;
\item $v$ experiences at least $c\log n$ type-2 golden rounds.
\end{itemize}
\end{lemma}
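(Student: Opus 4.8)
The plan is to fix the node $v$, condition on the high-probability events listed above (so that every call to \textsc{EstimateEffectiveDegree} obeys Lemma~\ref{lem:EED} and every \textsc{Decay} phase succeeds by Claim~\ref{clm:decay}), and then argue about the desire-level trajectory $p_0(v),p_1(v),\dots$ along the actual execution. If $v$ or one of its neighbors joins \MIS\ during the first $13c\log n$ rounds, the first two alternatives hold and we are done, so I would assume $v$ survives all $T=13c\log n$ rounds and show that one of the two golden-round counts reaches $c\log n$. The whole argument mirrors Ghaffari's counting lemma, re-tuned to our thresholds (high degree $d_t(v)\ge 1$; the estimate cutoffs $0.01$ and the type-2 degree bound $\tfrac1{200}$; the $\tfrac1{10}$ low-degree fraction) and to the fact that we run for $O(\log n)$ rather than $O(\log\Delta)$ rounds.

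First I would dispose of the type-1 count by a self-contained potential argument on $p_t(v)$ alone. In every round $p_t(v)$ either halves (exactly when \textsc{EstimateEffectiveDegree} returns \textsc{High}, which by Lemma~\ref{lem:EED} forces $d_t(v)>0.01$) or is replaced by $\min\{2p_t(v),\tfrac12\}$ (when it returns \textsc{Low}, which forces $d_t(v)<1$). Let $H$ be the number of halving rounds. Since $p_0(v)=\tfrac12$ and $p_t(v)$ never exceeds $\tfrac12$, the number of uncapped doublings is at most $H$, so the number of \emph{capped} rounds (those with $p_t(v)=\tfrac12$ and a \textsc{Low} return) is at least $T-2H$. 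Each capped round has $p_t(v)=\tfrac12$ together with $d_t(v)<1$, hence is a type-1 golden round, so $g_1\ge T-2H$. Consequently, if $H\le 6c\log n$ then already $g_1\ge c\log n$ and we are done.

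It remains to treat the case $H>6c\log n$, where I would harvest type-2 golden rounds from the halving rounds. Each halving round has $d_t(v)>0.01>\tfrac1{200}$, so it meets the degree condition of a type-2 golden round; it fails to be one only if the \emph{high}-degree neighbors of $v$ contribute more than $\tfrac{9}{10}d_t(v)$ of the effective degree. Calling these the \emph{bad} rounds and letting $B$ be their number, we have $g_2\ge H-B$, so it suffices to show $B\le 5c\log n$. The idea is an amortized charging argument: in each bad round the high-degree neighbors of $v$ contribute more than $\tfrac{9}{10}\cdot 0.01$ to $d_t(v)$, and each such neighbor, being high-degree, has its desire-level halved that round and hence \emph{sheds} a constant fraction of its contribution; I would charge each bad round to the desire shed by $v$'s high-degree neighbors in that round and bound the total shed desire.

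The main obstacle is precisely this last bound. A single node's desire shed while high-degree is \emph{not} bounded by a constant — it can oscillate indefinitely between $\tfrac12$ and $\tfrac14$ — so the charging cannot be closed neighbor-by-neighbor and must be amortized globally. The key structural fact to exploit is that any replenishment (an uncapped doubling) of a neighbor's desire is spent in a low-degree round, and that a high-degree node whose own effective degree is dominated by low-degree neighbors is itself in a type-2 golden round, and so is removed with constant probability within $O(\log n)$ rounds. Thus sustained high-degree domination cannot persist without triggering removals that end the trajectory. Formalizing this — tracing the desire that feeds the bad rounds back to low-degree sources and bounding how long it can be relayed through high-degree nodes before those nodes leave the graph within the $13c\log n$-round horizon — is the delicate step, and is exactly the part of Ghaffari's analysis that requires the most care to adapt to the radio-model thresholds.
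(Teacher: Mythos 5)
Your type-1 argument is sound and is essentially the paper's (the paper counts rounds with $d_t(v)\ge 0.01$ rather than \textsc{High} returns, but the double-counting bookkeeping is the same), and your reduction of the type-2 count to bounding the number $B$ of ``bad'' halving rounds is also the right first move. The genuine gap is precisely the step you flag as unformalized: you never establish $B\le 5c\log n$, and the route you sketch for it --- charging bad rounds to shed desire, amortizing globally, and arguing that high-degree nodes dominated by low-degree neighbors are \emph{removed with constant probability} --- is not how the argument closes, and would be problematic if pursued. It converts a statement that is deterministic (conditioned only on the \textsc{Decay} and \textsc{EstimateEffectiveDegree} guarantees of Claim~\ref{clm:decay} and Lemma~\ref{lem:EED}) into one that depends on probabilistic removals of $v$'s neighbors, entangling this counting lemma with Lemma~\ref{lem:goldengood} and requiring an argument over the whole neighborhood structure that the paper never needs.

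The missing idea is to track $d_t(v)$ itself as a multiplicative potential, rather than to bound total shed desire. In a bad round the low-degree contribution (less than $\frac{1}{10}d_t(v)$) at most doubles while the high-degree contribution at least halves, so $d_{t+1}(v)\le 2\cdot\frac{1}{10}d_t(v)+\frac12\cdot\frac{9}{10}d_t(v)<\frac23 d_t(v)$; in a type-2 golden round $d_t(v)$ at most doubles; and when $d_t(v)<\frac{1}{200}$ one step cannot lift it above $0.01$, so crossings back above $0.01$ occur only in type-2 golden rounds. Since $d_0(v)\le n/2$ and every halving round has $d_t(v)>0.01$, the product of per-round factors up to the last halving round cannot fall below $\frac{1/200}{n/2}$, which forces $2^{g_2}\left(\frac23\right)^{B}\ge \frac{1/200}{n/2}$, i.e.\ $B\le 2g_2+2\log(100n)$. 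Combined with your $g_2\ge H-B$ and $H>6c\log n$, this gives $3g_2\ge 6c\log n-2\log(100n)$, hence $g_2>c\log n$ for sufficiently large $c$. This also dissolves the oscillation obstacle you raise: a neighbor's doubling either is dominated by the high-degree halvings (so the round still contracts by $\frac23$), or it is part of a low-degree contribution of at least $\frac{1}{10}d_t(v)$, in which case the round is type-2 golden and is exactly the case credited with a factor-$2$ expansion. No per-neighbor or global amortization, and no appeal to neighbors leaving the graph, is needed.
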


\begin{proof}
We consider the first $13c \log n$ rounds, and let $g_1$ and $g_2$ denote the number of golden rounds of types $1$ and $2$ for $v$ respectively. We will assume that the first three events do not happen (i.e., node $v$ is not removed and $g_1 \le c\log n$), and show that in this case, the fourth must (i.e., $g_2 > c\log n$).

Let $h$ be the number of rounds during which $d_t(v)\ge 0.01$. Note that $h$ is an upper bound on the number of rounds in which \textsc{EstimateEffectiveDegree} returns $\textsc{High}$ for $v$ and $p_t(v)$ decreases by a $2$-factor. Since the number of $2$-factor increases in $p_t(v)$ can be at most equal to the number of $2$-factor decreases, the former is also at most $h$. So, in at least $13c\log n - 2h$ of the first $13c\log n$ rounds, we have $p_t(v) = 1/2$. Out of these rounds, at most $h$ of them have $d_t(v) \ge 0.01$. As we have assumed $g_1 \le c\log n$, we get that $13c\log n - 3h \le c\log n$, i.e. $h\ge \frac13 (13c-c)\log n = 4c\log n$.

Let us consider how the effective-degree $d_t(v)$ of $v$ changes over time. If some round $t$ is not a golden round of type 2, but $d_t(v) \ge \frac{1}{200}$, then we have
\[d_{t+1}(v) \le 
2\frac{1}{10}d_t(v)+ \frac12 \frac{9}{10}d_t(v) < \frac 23 d_t(v).\]

The total effect of all $g_2$ golden rounds and all other rounds in which $d_t(v) \ge 0.01$ is therefore to reduce $d_t(v)$ by multiplying by a factor of at most $2^{g_2} (\frac 23)^{h-g_2}$. Notice that once all type-2 golden rounds have passed and $d_t(v)$ drops below $0.01$, it can never again exceed $0.01$ (since this would require a round where $d_t(v)\ge \frac{1}{200}$ and $d_{t+1}(v) \ge d_t(v)$, which can only occur in a type-2 golden round). So, we must have $2^{g_2} (\frac 23)^{h-g_2} \ge \frac{1}{200}/\frac n2$, since the combined effect of these rounds cannot have reduced $d_t(v)$ from its starting value (at most $n/2$) to below $\frac{1}{200}$. This gives 

\[
h \le \log_{2/3} (100n\cdot 2^{-g_2}) + g_2 \le 2 \log(100n\cdot 2^{g_2}) + g_2 = 3g_2 + 2\log(100 n) \enspace.
\]
 
Since, from below, $h\ge 4c\log n$, we get $g_2 >\frac13(4c\log n-2\log(100n)) > c\log n$ (so long as $c$ is sufficiently large).
\end{proof}

\begin{lemma}\label{lem:goldengood}
In any golden round for $v$, $v$ is removed from the graph with at least $1/8004$ probability. Therefore, after $13c\log n$ rounds (for sufficiently large constant $c$), no nodes remain, with high probability.
\end{lemma}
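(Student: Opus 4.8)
The plan is to prove the per-round claim by splitting on the two golden-round types, and then to boost a constant per-round removal probability into high-probability global termination using the $\Omega(\log n)$ golden rounds guaranteed by Lemma~\ref{lem:manygolden}.

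For a type-1 golden round ($d_t(v) < 1$ and $p_t(v) = \tfrac12$), I would bound the probability that $v$ \emph{itself} joins $\MIS$ directly: $v$ marks with probability $p_t(v) = \tfrac12$, and, using $1 - x \ge 4^{-x}$ for $x\in[0,\tfrac12]$ together with the fact that every desire-level is at most $\tfrac12$, the probability that none of its neighbors mark is $\prod_{u\in N(v)}(1 - p_t(u)) \ge 4^{-d_t(v)} > 4^{-1} = \tfrac14$. Hence $v$ joins $\MIS$ (and is removed) with probability at least $\tfrac12\cdot\tfrac14 = \tfrac18 > \tfrac{1}{8004}$. This case is immediate.

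For a type-2 golden round the heart of the argument is to show that \emph{some low-degree neighbor} joins $\MIS$. Let $L$ be the set of low-degree neighbors of $v$ and $p_L := \sum_{u\in L}p_t(u)$, so by the type-2 hypothesis $p_L \ge d_t(v)/10 \ge \tfrac{1}{2000}$. Let $X$ count the nodes of $L$ that join $\MIS$ this round, writing $S_u$ for the event that $u$ joins $\MIS$; note that $X\ge 1$ forces $v$ to be removed, since $L\subseteq N(v)$. I would estimate $X$ via its first two moments. Since each $u\in L$ is low-degree ($d_t(u) < 1$), the same $4^{-x}$ inequality gives $\Prob{S_u} = p_t(u)\prod_{w\in N(u)}(1 - p_t(w)) > p_t(u)/4$, so $\mathbb{E}[X] > p_L/4$. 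For the second moment, the key computation is that for $u\ne u'$ the joint event $S_u\cap S_{u'}$ is empty when $u,u'$ are adjacent, and otherwise (since the product defining $\Prob{S_u\cap S_{u'}}$ ranges over a superset of $N(u)$) satisfies $\Prob{S_u\cap S_{u'}} \le p_t(u')\,\Prob{S_u}$; summing over pairs yields $\mathbb{E}[X^2] \le \mathbb{E}[X](1 + p_L)$. The second-moment (Cauchy--Schwarz) inequality $\Prob{X\ge1}\ge \mathbb{E}[X]^2/\mathbb{E}[X^2]$ then gives
\[
\Prob{X\ge1} \ge \frac{\mathbb{E}[X]}{1+p_L} \ge \frac{p_L/4}{1+p_L} \ge \frac{1/2000}{4\,(1 + 1/2000)} = \frac{1}{4\cdot 2001} = \frac{1}{8004},
\]
where the last inequality uses that $p/(1+p)$ is increasing, so the worst case is $p_L = \tfrac1{2000}$. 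I expect this type-2 analysis to be the main obstacle: the difficulty is that $p_L$ can be either tiny or large, and a naive union bound over $L$ degrades (indeed becomes vacuous) when $p_L$ is large, whereas the second-moment bound $\mathbb{E}[X]/(1+p_L)$ degrades gracefully to a constant in that regime while still matching $\approx p_L/4$ when $p_L$ is small --- and it is exactly the $1+p_L$ denominator that produces the stated constant $1/8004$.

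Finally, to conclude global termination I would fix a node $v$ and show it survives $13c\log n$ rounds with probability at most $n^{-\Omega(c)}$. By Lemma~\ref{lem:manygolden}, on the event that $v$ is not removed within these rounds, $v$ experiences at least $c\log n$ golden rounds. Whether round $t$ is golden and whether $v$ is still present are both determined by the history through round $t-1$, while the removal event in round $t$ uses that round's fresh, independent marking; so, conditioned on the past, each golden round removes $v$ with probability at least $\tfrac1{8004}$. Making this rigorous via a supermartingale (tracking $(1-\tfrac1{8004})^{-(\#\text{golden rounds so far})}$ times the survival indicator), or equivalently by telescoping conditional survival probabilities, gives $\Prob{v\text{ survives}} \le (1-\tfrac1{8004})^{c\log n} \le n^{-c/8004}$. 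Taking $c$ a sufficiently large constant makes this at most $n^{-2}$, and a union bound over all $n$ nodes shows that with high probability no node remains after $13c\log n$ rounds, as claimed.
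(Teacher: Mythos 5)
Your proof is correct, and it matches the paper on the type-1 case and the overall structure, but your type-2 analysis takes a genuinely different route. The paper proves the $1/8004$ bound for type-2 golden rounds by a sequential revealing argument: it exposes the marks of $v$'s low-degree neighbors one at a time until it finds a marked node $u$ (which happens with probability at least $1-e^{-d_t(v)/10}\ge 1-e^{-1/2000}>\frac{1}{2001}$), and then observes that conditioning on the already-revealed neighbors having declined to mark can only help the event that no neighbor of $u$ marks, so that event still has probability at least $4^{-d_t(u)}>\frac14$; multiplying gives $\frac{1}{2001}\cdot\frac14=\frac{1}{8004}$. You instead run a second-moment (Paley--Zygmund) argument on the count $X$ of low-degree neighbors joining \MIS: adjacent pairs in $L$ cannot both join, non-adjacent pairs satisfy $\Prob{S_u\cap S_{u'}}\le p_t(u')\Prob{S_u}$ by independence of marks, hence $\mathbb{E}[X^2]\le \mathbb{E}[X](1+p_L)$ and $\Prob{X\ge 1}\ge \mathbb{E}[X]/(1+p_L)\ge \frac{p_L/4}{1+p_L}\ge \frac{1}{8004}$ --- amusingly, exactly the paper's constant, arrived at by a different computation. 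Both arguments are sound; the paper's is shorter and purely combinatorial, while yours is more mechanical and makes the pairwise correlation structure explicit, and its bound $\mathbb{E}[X]/(1+p_L)$ degrades gracefully whether $p_L$ is tiny or large (which is exactly the difficulty a naive union bound cannot handle). Finally, your treatment of the concluding high-probability step is more careful than the paper's: the paper simply raises $(1-\frac{1}{8004})$ to the power $c\log n$, whereas you note that whether a round is golden, and whether $v$ is still present, are determined by the history (the desire levels $p_t$), so that the per-round bound holds conditionally on the past and a supermartingale/telescoping argument rigorously yields $(1-\frac{1}{8004})^{c\log n}$; this fills in a detail the paper leaves implicit.
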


\begin{proof}
In each type-1 golden round, node $v$ gets marked with probability $1/2$. The probability that no neighbor of $v$ is marked is at least
\[\prod_{u\in N(v)}(1-p_t(u)) \ge 4^{-\sum_{u\in N(v)}p_t(u)} = 4^{-d_t(v)} > 4^{-1} = \frac 14\enspace.\] 
Here, we use that $1-x\ge 4^{-x}$ for all $x\in [0,\frac 12]$. Hence, $v$ joins the MIS with probability at least $\frac 18$.

To analyze a type-2 golden round, consider the process of revealing whether each low-degree neighbor of $v$ chooses to mark itself, in any arbitrary order, until we find one that does. Denoting by $L$ the set of low-degree neighbors of $v$, the probability that we find some marked node $u\in L$ is at least
\[1-\prod_{u\in L} (1-p_t(u)) \ge 1-e^{-\sum_{u\in L} p_t(u) } \ge 1-e^{-d_t(v)/10}\ge 1-e^{\frac{-1}{2000}}> \frac{1}{2001}\enspace.\]

Here, we use that $1-x\le e^{-x}$ for all real $x$. In this case, the probability that no neighbor of $u$ marks itself (note that while we may have revealed the choices of some of $u$'s neighbors already, they must necessarily have chosen not to mark themselves) is at least 
\[\prod_{w\in N(u)}(1-p_t(w)) \ge 4^{-\sum_{w\in N(u)}p_t(w)} = 4^{-d_t(v)} > 4^{-1} = \frac 14\enspace.\]

So, $v$ has a neighbor join \MIS\ and therefore leaves the graph with probability at least $1/8004$. We now have that in either type of golden round, $v$ leaves the graph with probability at least $1/8004$, and  by Lemma \ref{lem:manygolden} there are at least $c\log n$ golden rounds. So, if $c$ is chosen to be a sufficiently large constant, then the probability that $v$ does not get removed in the first $13c\log n$ rounds is at most $(1 - 1/8004)^{c\log n}\le n^{-3}$. Taking a union bound over all nodes $v$, the remaining graph is empty with probability at least $1-n^{-2}$.
\end{proof}

\begin{theorem}
	Algorithm \ref{alg:RMIS} computes a maximal independent set in general-graph radio networks in $O(\log^3 n)$ time-steps, succeeding with high probability.
\end{theorem}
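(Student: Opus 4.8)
The plan is to assemble the correctness and running-time bounds from the lemmas already established, since the hard technical work is done by Lemmas~\ref{lem:manygolden} and~\ref{lem:goldengood}. First I would argue \emph{correctness}, namely that Algorithm~\ref{alg:RMIS} outputs a set $\MIS$ satisfying both the independence and maximality properties. For independence, note that a node $v$ joins $\MIS$ only when it has marked itself and (via the first \textsc{Decay} phase) has confirmed that none of its neighbors marked themselves in the same round. Since two adjacent nodes cannot both pass this test simultaneously (if both marked, each would detect the other's mark with high probability), no edge can appear within $\MIS$; one must be careful that the conditioning at the start of Section~4.2 is exactly what guarantees the \textsc{Decay} phases reliably inform every node of a marked or $\MIS$ neighbor, so I would invoke that conditioning explicitly. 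For maximality, observe that whenever a node joins $\MIS$, the subsequent \textsc{Decay} phase informs all its neighbors, who then remove themselves; so any node still present has no $\MIS$ neighbor yet, and by Lemma~\ref{lem:goldengood} every node is eventually removed --- either by joining $\MIS$ itself or by having a neighbor join --- which is precisely the statement that every $v\notin\MIS$ has a neighbor in $\MIS$.

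Next I would argue \emph{termination with high probability}. Lemma~\ref{lem:goldengood} states that after $13c\log n$ rounds, for a sufficiently large constant $c$, no nodes remain in the graph with high probability. Since no node removed from the graph ever rejoins or re-marks, once the graph is empty the output set $\MIS$ is fixed and satisfies both required properties. Thus running the outer loop for $O(\log n)$ rounds (concretely $13c\log n$) suffices to produce a correct maximal independent set with high probability.

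The remaining piece is the \emph{running-time accounting}. Each of the $O(\log n)$ outer iterations performs: one marking step (local, no communication), one block of $O(\log n)$ iterations of \textsc{Decay} for marked nodes, one block of $O(\log n)$ iterations of \textsc{Decay} for $\MIS$ nodes, and one call to \textsc{EstimateEffectiveDegree}. By Claim~\ref{clm:decay}, each \textsc{Decay} block costs $O(\log^2 n)$ time-steps (since each of the $O(\log n)$ iterations itself runs for $\log n$ time-steps). By inspection of Algorithm~\ref{alg:EED}, \textsc{EstimateEffectiveDegree} consists of $\log n$ rounds each of $C\log n$ time-steps, hence also $O(\log^2 n)$ time-steps. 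So each outer iteration takes $O(\log^2 n)$ time-steps, and over $O(\log n)$ iterations the total is $O(\log^3 n)$.

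The main obstacle, and the step requiring the most care, is not the arithmetic but the \emph{union bound over all conditioning events}. The correctness arguments all presume that the three bulleted events at the start of Section~4.2 hold: that every \textsc{Decay} phase correctly informs nodes of marked/$\MIS$ neighbors, and that every \textsc{EstimateEffectiveDegree} call satisfies Lemma~\ref{lem:EED}. Each individual event holds with high probability, and there are $O(\log n)$ rounds each contributing a constant number of such events, so a union bound over all of them (together with the final removal guarantee of Lemma~\ref{lem:goldengood}) still leaves overall success probability $1 - n^{-\Omega(1)}$; I would state this union bound explicitly and note that the per-event failure probabilities $n^{-3}$ from Lemma~\ref{lem:EED} and the \textsc{Decay} guarantee comfortably absorb the $O(\log n)\cdot n$ factor from ranging over rounds and nodes. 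Concluding, the algorithm succeeds with high probability in $O(\log^3 n)$ time-steps, as claimed.
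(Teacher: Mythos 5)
Your proposal is correct and follows essentially the same route as the paper's proof: condition on the high-probability success of the \textsc{Decay} and \textsc{EstimateEffectiveDegree} calls, derive independence from the same-round and prior-round marking/removal logic, derive maximality from Lemma~\ref{lem:goldengood} (all nodes removed, and removal only occurs when a node or a neighbor joins \MIS), and count $O(\log n)$ rounds of $O(\log^2 n)$ time-steps each. Your explicit union-bound accounting is merely a spelled-out version of the paper's one-line claim that the conditioning costs at most $n^{-2}$ in success probability.
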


\begin{proof}
	The algorithm clearly takes $O(\log^3 n)$ time-steps, since each round has $O(\log^2 n)$ time-steps of communication. Conditioning on the high-probability events described in Claim \ref{clm:decay}, Lemma \ref{lem:EED}, and Lemma \ref{lem:goldengood} (which affects the overall success probability by at most $n^{-2}$), we must now show that the output \MIS\ is indeed maximal and independent. Maximality follows from Lemma \ref{lem:goldengood}, since nodes are only removed if they or a neighbor joins \MIS, and so if all nodes are removed, \MIS\ is maximal. Independence clearly holds so long as all calls to \textsc{Decay} succeed: a node $v$ can only join \MIS\ in an round if none of its neighbors joined in previous rounds (since otherwise $v$ would have been removed) and none of its neighbors marks itself in \emph{this} round (and therefore, none of its neighbors joins \MIS). So, the output is a correct maximal independent set. 
\end{proof}

\section{Conclusions and Open Questions}
We have extended the broadcast and leader election algorithm of Czumaj and Davies \cite{CD21} to achieve a running time parametrized by independence number. While this algorithm works on general graphs (in $O(D\log_D\maxis+\log^{O(1)} n)$) and requires no geographic information, it yields a running time of $O(D+\log^{O(1)} n)$ in many previously-studied geometric-based graph classes. This running time has an optimal $O(D)$ leading term, and improves over previous results in such graph classes, which are either slower (for most parameter regimes) or require collision detection or other extra capabilities. We have not tried to optimize the $\log^{O(1)} n$ term, and it remains open to bring this closer to the $\Omega(\log^2 n)$ lower bound \cite{ABLP91}.

As part of this algorithm, we also provide the first algorithm for maximal independent set in general-graph radio networks. This algorithm runs in $O(\log^3 n)$ time-steps, within a $\log n$ factor of the $\Omega(\log^2 n)$ lower bound. Again, it is open whether a faster MIS algorithm is possible, or whether MIS in general graphs is in fact harder than in growth-bounded ones.

Perhaps the most interesting remaining complexity gap, though, is in the leading (diameter-dependent) term in general graphs. Our upper bound here is $O(D\log_D\maxis)$, but the only lower bound (when spontaneous transmissions are permitted) is the trivial $\Omega(D)$ bound. Any better lower bound would be interesting, and it remains to be seen whether independence number, or some other measure, is truly the `right' parameterization.

	\newcommand{\Proc}{Proceedings of the\xspace}
\newcommand{\STOC}{Annual ACM Symposium on Theory of Computing (STOC)}
\newcommand{\FOCS}{IEEE Symposium on Foundations of Computer Science (FOCS)}
\newcommand{\SODA}{Annual ACM-SIAM Symposium on Discrete Algorithms (SODA)}
\newcommand{\COCOON}{Annual International Computing Combinatorics Conference (COCOON)}
\newcommand{\DISC}{International Symposium on Distributed Computing (DISC)}
\newcommand{\ESA}{Annual European Symposium on Algorithms (ESA)}
\newcommand{\ICALP}{Annual International Colloquium on Automata, Languages and Programming (ICALP)}
\newcommand{\IPL}{Information Processing Letters}
\newcommand{\JACM}{Journal of the ACM}
\newcommand{\JALGORITHMS}{Journal of Algorithms}
\newcommand{\JCSS}{Journal of Computer and System Sciences}
\newcommand{\PODC}{Annual ACM Symposium on Principles of Distributed Computing (PODC)}
\newcommand{\SICOMP}{SIAM Journal on Computing}
\newcommand{\SPAA}{Annual ACM Symposium on Parallelism in Algorithms and Architectures (SPAA)}
\newcommand{\STACS}{Annual Symposium on Theoretical Aspects of Computer Science (STACS)}
\newcommand{\TALG}{ACM Transactions on Algorithms}
\newcommand{\TCS}{Theoretical Computer Science}
\bibliography{dbc}

\appendix
\section{Other Components of the Algorithm of \cite{CD21}}\label{app:components}

Algorithm \ref{alg:C} is conducted concurrently (via time multiplexing) with a background process (Algorithm \ref{alg:C-B}). The purpose of this process is to pass messages across the coarse cluster boundaries in the main process. It does not itself use coarse clusterings or random choices of $\beta$, and is therefore slower - broadcasting using only the background process would require $O(D\log n + \log^{O(1)}n)$ rounds. However, the analysis considers primarily the main process, and only considers the background process when messages approach coarse cluster boundaries, which occurs rarely enough to fit within the overall fast running time.

\begin{algorithm}[H]
	\caption{\textsc{Compete$(S)$ - Background Process}}
	\label{alg:C-B}
	\begin{enumerate}[1)]
		\item Compute $D^{0.2}$ different \emph{fine clusterings} using $\textsc{Partition}(\beta)$ with $\beta = D^{-0.1}$.
		\item Compute a schedule within each cluster, for each clustering.
		\item Cycling through clusterings in round-robin order, perform \textsc{Intra-Cluster Propagation$(O(\frac{\log n}{\beta}))$}.
	\end{enumerate}
\end{algorithm}

To achieve fast communication within clusters, \textsc{Intra-Cluster Propagation} uses fast schedules from \cite{GHK13} as implemented in \cite{HW16}:

\begin{algorithm}[H]
	\caption{\textsc{Intra-Cluster Propagation$(\ell)$}}
	\label{alg:ICP}
	\begin{enumerate}[1)]
		\item Broadcast the highest message known by the cluster center to all nodes within distance $\ell$.
		\item All such nodes which know a higher message participate in a broadcast towards the cluster center.
		\item Broadcast the highest message known by the cluster center to all nodes within distance $\ell$.
	\end{enumerate}
\end{algorithm}

\textsc{Intra-Cluster Propagation} also has a background process, conducted concurrently via time multiplexing as before. The purpose of this process is to work around collisions caused by nodes bordering other fine clusters, which can interrupt the fast schedules.

\begin{algorithm}[H]
	\caption{\textsc{Intra-Cluster Propagation Background Process}}
	\label{alg:ICP-B}
	Repeat until main process is complete:\\
	\For {$i = 1$ {\textrm\textbf{to}} $\log n$}{
		with probability $2^{-i}$ (coordinated in each cluster) perform one iteration of \textsc{Decay};\\
		otherwise remain silent for $\log n$ steps.
	}
\end{algorithm}	

This background process makes use of the classic \textsc{Decay} protocol by Bar-Yehuda, Goldreich, and Itai \cite{BGI91}, discussed here as Algorithm \ref{alg:decay}.

\end{document}